\newcommand{\tr}[1]{\textrm{#1}}
\newcommand{\mr}[1]{\mathrm{#1}}
\newcommand{\tnr}[1]{{\textnormal{#1}}}
\newcommand{\mc}[1]{\mathcal{#1}}
\newcommand{\mf}[1]{\mathsf{#1}}
\newcommand{\ms}[1]{\mathds{#1}}
\newcommand{\ov}[1]{\overline{#1}}
\newcommand{\bc}{\boldsymbol{c}}
\newcommand{\bp}{\boldsymbol{p}}
\newcommand{\bQ}{\boldsymbol{Q}}
\newcommand{\bx}{\boldsymbol{x}}
\newcommand{\by}{\boldsymbol{y}}
\newcommand{\bz}{\boldsymbol{z}}
\newcommand{\figref}[1]{Fig.~\ref{#1}}
\newcommand{\secref}[1]{Sec.~\ref{#1}}
\newcommand{\appref}[1]{Appendix~\ref{#1}}
\newcommand{\exref}[1]{Example~\ref{#1}}
\newcommand{\lemref}[1]{Lemma~\ref{#1}}
\newcommand{\e}{\mr{e}}
\newcommand{\ie}{i.e.,~} 		%note the comma and the space (~)
\newcommand{\eg}{e.g.,~}	%note the comma and the space (~)
\newcommand{\cf}{cf.~}		%note the space (~)
\newcommand{\argmax}{\mathop{\mr{argmax}}}
\newcommand{\set}[1]{\{#1\}}
\newcommand{\SET}[1]{\left\{#1\right\}}
\newcommand{\cd}{\cdot}
\newcommand{\ld}{\ldots}
\newcommand{\PR}[1]{\Pr\SET{#1}}       	% Probability
\newcommand{\pdf}{p}            			% PDF. I vote for having the PDFs and the PMFs as italic because the pmf as \tr{P}_\bX(\bX) looks weird and also because I like to see the pmfs and the pdfs as functions, which we denote using italic letters (like the Q-function, f(t), s(t), h(t), etc.)
\newcommand{\cdf}{F}            			% CDF
\newcommand{\ccdf}{F^\tnr{c}}            			% complementary CDF 
\newcommand{\IND}[1]{\ms{I}\big[{#1}\big]}   	% Indicator function
\newcommand{\Ex}{\ms{E}}     			% Expectation (AA).
\newcommand{\mcA}{\mc{A}}
\newcommand{\mcF}{\mc{F}}
\newcommand{\mcM}{\mc{M}}
\newcommand{\mcP}{\mc{P}}
\newcommand{\mcR}{\mc{R}}
\newcommand{\mcS}{\mc{S}}
\newcommand{\mcX}{\mc{X}}
\newcommand{\mfa}{\mf{a}}
\newcommand{\mfk}{\mf{k}}
\newcommand{\mfm}{\mf{m}}
\newcommand{\mfs}{\mf{s}}
\newcommand{\mfM}{\mf{M}}
\newcommand{\mfI}{\mf{I}}
\newcommand{\mfR}{\mf{R}}
\newcommand{\Real}{\mathbb{R}}		% R (The real,natural, etc. should used blackboard bold fonts... \mathbb{})
\newcommand{\Binary}{\mathbb{B}}	% Binary Set
\newcommand{\SNR}{\mathsf{snr}}  %%% SNR
\newcommand{\SNRrv}{\mathsf{SNR}}  %%% SNR random variable
\newcommand{\SNRav}{\ov{\mathsf{snr}}}  %%% SNR
\newcommand{\X}{\mcX}	% constellation
\newcommand{\R}{R}              	% spectral efficiency
\newcommand{\Nb}{{\mathop{N_\tnr{b}}}} 	% number of bits in the block
\newcommand{\Ns}{{\mathop{N_\tnr{s}}}} 	% number of symbols in the block
\newcommand{\Nc}{N_\tnr{c}} 	% number of coded bits
\newcommand{\IR}{\tnr{ir}}
\newcommand{\ack}{\mf{ACK}}  %%% message ACK
\newcommand{\nack}{\mf{NACK}}  %% message NACK
\newcommand{\kmax}{K}		%% maximum number of transmissions
\newcommand{\xp}{\tnr{xp}}
\newcommand{\Isig}{I^\Sigma}  %% I^\Sigma (cumulated MI)
\newcommand{\Rsig}{\R^\Sigma}  %% R^\Sigma (cumulated rate)
\newcommand{\Eharqa}{\mc{E}_{\ack}}
\newcommand{\Eharqn}{\mc{E}_{\nack}}
\newcommand{\dBval}{\tnr{dB}} % dB (numerical value)
\newcommand{\dB}{~\tnr{[dB]}} % dB (value)
\newacronym[\glsshortpluralkey=PDFs,\glslongpluralkey=probability density functions]{pdf}{PDF}{probability density function}
\newacronym[\glsshortpluralkey=CDFs,\glslongpluralkey=cumulative density functions]{cdf}{CDF}{cumulative density function}
\newacronym[\glsshortpluralkey=CCDFs,\glslongpluralkey=complementary cumulative density functions]{ccdf}{CDF}{complementary cumulative density function}
\newacronym[\glsshortpluralkey=PMFs,\glslongpluralkey=probability mass functions]{pmf}{PMF}{probability mass function}
\newacronym[]{lhs}{l.h.s.}{left-hand side}
\newacronym[]{rhs}{r.h.s.}{right-hand side} 
\newacronym[]{bicm}{BICM}{bit-interleaved coded modulation}
\newacronym[]{bicmid}{BICM-ID}{BICM with iterative demapping}
\newacronym[]{cm}{CM}{coded modulation}
\newacronym[]{tcm}{TCM}{trellis-coded modulation}
\newacronym[]{mlc}{MLC}{multi-level coding}
\newacronym[]{pam}{PAM}{pulse amplitude modulation}
\newacronym[]{bpsk}{BPSK}{binary phase shift keying}
\newacronym[]{qam}{QAM}{quadrature amplitude modulation}
\newacronym[]{psk}{PSK}{phase shift keying}
\newacronym[\glsshortpluralkey=LLRs,\glslongpluralkey=logarithmic likelihood ratios]{llr}{LLR}{logarithmic likelihood ratio}
\newacronym[]{map}{MAP}{maximum a posteriori}
\newacronym[]{ml}{ML}{maximum likelihood}
\newacronym[\glsshortpluralkey=MIs,\glslongpluralkey=mutual informations]{mi}{MI}{mutual information}
\newacronym[\glsshortpluralkey=GMIs,\glslongpluralkey=generalized mutual informations]{gmi}{GMI}{generalized mutual information}
\newacronym[]{eesm}{EESM}{exponential effective-SNR-mapping}
\newacronym[]{bicm-gmi}{BICM-GMI}{BICM generalized mutual information}
\newacronym[]{awgn}{AWGN}{additive white Gaussian noise}
\newacronym[]{amc}{AMC}{adaptive modulation and coding}
\newacronym[]{csi}{CSI}{channel state information}
\newacronym[]{cqi}{CQI}{channel quality indicator}
\newacronym[]{sp}{SP}{set-partitioning}
\newacronym[]{gsm}{GSM}{global system for mobile communications}
\newacronym[]{edge}{EDGE}{enhanced data rates for GSM evolution}
\newacronym[]{3gpp}{3GPP}{3rd generation partnership project}
\newacronym[]{lte}{LTE}{Long Term Evolution}
\newacronym[]{dvb}{DVB}{digital video broadcasting}
\newacronym[\glsshortpluralkey=CCs,\glslongpluralkey=convolutional codes]{cc}{CC}{convolutional code}
\newacronym[\glsshortpluralkey=PCCCs,\glslongpluralkey=parallel concatenated convolutional codes]{pccc}{PCCC}{parallel concatenated convolutional code}
\newacronym[\glsshortpluralkey=TCs,\glslongpluralkey=turbo codes]{tc}{TC}{turbo code}
\newacronym{ldpc}{LDPC}{low-density parity-check}
\newacronym[]{ofdm}{OFDM}{orthogonal frequency-division multiplexing}
\newacronym[]{bep}{BEP}{bit-error probability}
\newacronym[]{wep}{WEP}{word-error probability}
\newacronym[]{sep}{SEP}{symbol-error probability}
\newacronym[]{pep}{PEP}{pairwise-error probability}
\newacronym[]{ttcm}{TTCM}{turbo-trellis coded modulation}
\newacronym[]{uep}{UEP}{unequal error protection}
\newacronym[\glsshortpluralkey=CENCs,\glslongpluralkey=convolutional encoders]{cenc}{CENC}{convolutional encoder}
\newacronym[]{mimo}{MIMO}{multiple-input multiple-output}
\newacronym[\glsshortpluralkey=SNRs,\glslongpluralkey=signal-to-noise ratios]{snr}{SNR}{signal-to-noise ratio}
\newacronym[\glsshortpluralkey=SINRs,\glslongpluralkey=the signal-to-interference-plus-noise ratios]{sinr}{SINR}{the signal-to-interference-plus-noise ratio}
\newacronym[]{msb}{MSB}{most-significative bit}
\newacronym[]{bcjr}{BCJR}{Bahl--Cocke--Jelinek--Raviv}
\newacronym[\glsshortpluralkey=SEDs,\glslongpluralkey=squared Euclidean distances]{sed}{SED}{squared Euclidean distance}
\newacronym[\glsshortpluralkey=EDs,\glslongpluralkey=Euclidean distances]{ed}{ED}{Euclidean distance}
\newacronym[\glsshortpluralkey=MEDs,\glslongpluralkey=minimum Euclidean distances]{med}{MED}{minimum Euclidean distance}
\newacronym[]{core}{CoRe}{constellation rearrangement}
\newacronym[]{msd}{MSD}{multistage decoding}
\newacronym[]{pdl}{PDL}{parallel decoding of the individual levels}
\newacronym[\glsshortpluralkey=GCs,\glslongpluralkey=Gray codes]{gc}{GC}{Gray code}
\newacronym[]{brgc}{BRGC}{binary-reflected Gray code}
\newacronym[]{nbc}{NBC}{natural binary code}
\newacronym[]{fbc}{FBC}{folded-binary code}
\newacronym[]{bsgc}{BSGC}{binary semi-Gray code}
\newacronym[]{msp}{MSP}{modified set-partitioning}
\newacronym[]{ssp}{SSP}{semi set-partitioning}
\newacronym[]{fhd}{FHD}{free Hamming distance}
\newacronym[]{mfhd}{MFHD}{maximum free Hamming distance}
\newacronym[]{ods}{ODS}{optimal distance spectrum}
\newacronym[]{iud}{i.u.d.}{independent and uniformly distributed}
\newacronym[]{ud}{u.d.}{uniformly distributed}
\newacronym[]{iid}{i.i.d.}{independent, identically distributed}
\newacronym[]{bico}{BICO}{binary-input continuous-output}
\newacronym[]{gh}{GH}{Gauss--Hermite}
\newacronym[\glsshortpluralkey=BSs,\glslongpluralkey=base-stations]{bs}{BS}{base-station}
\newacronym[\glsshortpluralkey=MSs,\glslongpluralkey=mobile-stations]{ms}{MS}{mobile-stations}
\newacronym[]{phy}{PHY}{physical layer} 
\newacronym[]{llc}{LLC}{logical link control} 
\newacronym[]{mac}{MAC}{media access control} 
\newacronym[]{fft}{FFT}{fast Fourier transform} 
\newacronym[]{cf}{CF}{characteristic function} 
\newacronym[]{mgf}{MGF}{moment generating function} 
\newacronym[]{ee}{EE}{energy efficiency} 
\newacronym[]{kkt}{KKT}{Karush--Kuhn--Tucker} 
\newacronym[]{mcs}{MCS}{modulation/coding scheme} 
\newacronym[]{fec}{FEC}{forward error correction}
\newacronym[]{arq}{ARQ}{automatic repeat request}
\newacronym[]{harq}{HARQ}{hybrid ARQ}
\newacronym[]{tarq}{TARQ}{truncated HARQ}
\newacronym[]{rrharq}{RR-HARQ}{repetition redundancy HARQ}
\newacronym[]{irharq}{IR-HARQ}{incremental redundancy HARQ}
\newacronym[]{ack}{ACK}{positive acknowledgment}
\newacronym[]{nack}{NACK}{negative acknowledgment}
\newacronym[]{crc}{CRC}{cyclic redundancy check}
\newacronym[]{dp}{DP}{dynamic programming}
\newacronym[]{gp}{GP}{geometric programming}
\newacronym[]{per}{PER}{packet error rate}
\newacronym[]{op}{OP}{outage probability}
\newacronym[]{spa}{SPA}{saddle-point approximation}
\newacronym[]{mrc}{MRC}{maximum ratio combining}
\newacronym[]{mdp}{MDP}{Markov decision process}
\newacronym[]{pomdp}{POMDP}{Partially observable MDP}
\newacronym[]{psimdp}{PSI-MDP}{Partial State Information Markov Decision Process}
\newacronym[]{mm}{MM-HARQ}{multi-message HARQ}
\newacronym[]{xp}{XP-HARQ}{Cross-packet HARQ}
\newacronym[]{ts}{TS}{time-sharing}
\newacronym[]{sc}{SC}{superposition coding}
\newacronym[]{sbrq}{SBRQ}{systematic backward retransmission}
\newacronym[]{brq}{BRQ}{backward retransmission}
\newacronym[]{lharq}{L-HARQ}{layer-coded HARQ}
\newacronym[]{anlharq}{AoNL-HARQ}{all-or-none layer-coded HARQ}
\newacronym[]{ppp}{PPP}{Poisson point process}
\pgfplotsset{width=3.4in,height=2.3in} %% dimension of the figure
\newtheorem{lemma}{Lemma}
\newtheorem{example}{Example}
\begin{document}

%------------------
%-- Paper title
\title{Adaptive Cross-Packet HARQ}
\author{Mohammed Jabi, Abdellatif Benyouss, Ma\"el Le Treust,  \'Etienne Pierre-Doray, and  Leszek Szczecinski
\thanks{%
M. Jabi, A. Benyouss, and L. Szczecinski are with 
INRS-EMT, University of Quebec, Montreal, Canada. e-mail: \{benyouss,jabi,leszek\}@emt.inrs.ca, }
\thanks{%
E. Pierre-Doray is with Polytechnique de Montr\'eal, Canada; he was also with INRS-EMT when this work was carried out. e-mail: \{etipdoray@gmail.com\}
}
\thanks{%
M.~Le Treust is with ETIS - UMR 8051 / ENSEA - Universit\'e de Cergy-Pontoise - CNRS, France. e-mail: \{mael.le-treust@ensea.fr.\}}
\thanks{%
Part of this work was presented at the IEEE Wireless Commun. Network. Conf. (WCNC), Doha, Qatar, April 2016.} %
}%

%-----------------

\maketitle
\thispagestyle{empty}

\begin{abstract}
In this work, we investigate a coding strategy devised to increase the throughput in \gls{harq} transmission over block fading channel. In our approach, the transmitter jointly encodes a variable number of bits for each round of \gls{harq}. The parameters (rates) of this joint coding can vary and may be based on the \gls{nack} provided by the receiver or, on the past (outdated) information about the channel states. These new degrees of freedom allow us to improve the match between the codebook and the channel states experienced by the receiver. The results indicate that significant gains can be obtained using the proposed coding strategy, particularly notable  when the conventional \gls{harq} fails to offer throughput improvement even if the number of transmission rounds is increased. The new cross-packet \gls{harq} is also implemented using turbo codes where we show that the theoretically predicted throughput gains materialize in practice, and we discuss the implementation challenges. 
\end{abstract}

%!TEX root =  ../HARQ.joint.codec.tex

\section{Introduction}\label{Sec:Intro}
% What it is about
In this work, in order to improve the throughput of the \gls{harq} transmission over block-fading channel,  we propose to use joint coding of multiple information packets into the same channel block and we develop methods to optimize the coding rates.  %and optimize a its parameters Our approach consists in a  with variable rates in each \gls{harq} round.
%whose number and contents depend on the outcome of the decoding in the \gls{harq} transmission rounds. hat deploy multi-packet with variable rates to improve the throughput
% What HARQ is (how it works the promise of achieving Erg. Cap.)

\gls{harq} is used in modern communications systems to deal with unpredictable changes in the channel (due to fading), and with the distortion of the transmitted signals (due to noise). \gls{harq} relies on the feedback/acknowledgement channel, which is used by the receiver to inform the transmitter about the decoding errors (via \gls{nack}) and about the decoding success, via \gls{ack}. After \gls{nack}, the transmitter makes another  transmission \emph{round} which conveys additional information necessary to decode the packet. This continues till \gls{ack} is receiver and then a new \gls{harq} \emph{cycle} starts again for another information packet. In so-called \emph{truncated} \gls{harq}, the cycle stops also if the maximum number of rounds is attained. 

As in many previous works, \eg \cite{Caire01,Larsson14}, we will consider throughput as a performance measure assuming that residual errors are taken care of by the upper layers \cite{Jabi16}. We consider here the ``canonical'' problem defined in \cite{Caire01}, where the \gls{csi} is available at the receiver but not at the transmitter, which knows only its statistical description. The essential part of \gls{harq} is channel coding, which is done over many channel blocks as long as \gls{nack}s are obtained over the feedback channel. 

It was shown in \cite{Caire01} that \gls{harq}'s throughput  may approach the  ergodic capacity of the channel with sufficiently high ``nominal'' coding rate per round. However, such an approach is based on large number of \gls{harq} rounds, and thus has a limited practical value: long buffers are required which becomes a limiting factor for implementation of \gls{harq} \cite{Lee15}. 

On the other hand, using finite nominal coding rate and truncated \gls{harq}, the difference between the  throughput achievable using \gls{harq} and the theoretical limits may be large, especially, when we target throughput close to the nominal rate \cite{Larsson14}, \cite{Jabi15b}. 

To address this problem, various adaptive versions of \gls{harq} were proposed in the literature. For example, \cite{Cheng03,Visotsky03,LiuR03,Visotsky05,Kim11,Szczecinski13,Pfletschinger14} suggested to vary the length of the codewords so as to strike the balance between the number of channel uses and the chances of successful decoding. Their obvious drawback is that the resources assigned to the various \gls{harq} rounds are not constant which may leave an ``empty'' space within the block. 

To deal with this issue, it was proposed to share the block resources (power, time or bandwidth) between various packets in \eg  \cite{Zhang09,Takahashi10,Chaitanya11b,Steiner08,Jabi16}, to encode many packets into predefined size blocks as done in \cite{ElAoun10,AOUN12}, or to group variable-length codewords to fill the channel block \cite{Wang07,Szczecinski13}. A simplified approach was also proposed in \cite{Popovski14} to transmit the redundancy using two-step encoding.

These approaches implicitly implement  a joint coding of many packets into a single channel block. Here, we want to address the issue of cross-packet coding explicitly. The idea of this \gls{xp} is to get rid of the restricting assumptions proper to various heuristics developed before and to use a generic joint \gls{harq} encoder accepting many information packets and encoding them into a common codeword which fills the channel block. 

%The intermediate objective is to determine the coding rate of each packet so that the throughput is maximized. An explicit joint coding of many packets for \gls{harq} was already proposed in \cite{Hausl07,Chui07,Duyck10}, but it aimed at increasing the transmission reliability, was derived  for a maximum of two rounds,  and  adaptation/optimization of the coding rates was not considered. 

% What we propose and contribution
The contributions of this work are the following:
\begin{itemize}
\item We propose a general framework to analyze joint encoding of multiple packets which allow us to derive the relationship between the coding rates and the throughput. Our approach to cross-packet coding is similar to the one shown in  \cite{Hausl07,Chui07,Duyck10,Trillingsgaard14}, which, however, did not optimize the coding parameters. The optimization was proposed in \cite{Nguyen15}, however, due to complex decoding rules, it was very tedious and thus limited to the case of a simple channel model. In our work we simplify the problem assuming asymptotically long codewords are used, which leads to a compact description of the decoding criteria and allows us to solve the rate-optimization problem.
\item We consider the so-called multi-bit feedback to adapt the  coding rates to the channel state experienced by the receiver in the past transmission rounds of \gls{harq}. The same idea was exploited already \eg in\cite{Tuninetti11b,Nguyen12,Szczecinski13,Cheng03,Karmokar04,Jabi15,Visotsky05,Djonin08,Gopalakrishnan08b,Pfletschinger14,Jabi16}. The assumption of multi-bit feedback  not only simplifies the optimization but also yields the results which may be treated as the ultimate performance limits of any adaptation schemes when the instantaneous \gls{csi} is not available at the transmitter.
\item We optimize the coding rates  using the \gls{mdp} formulation \cite[Chap.~4]{Bertsekas07_book}, and compare the  proposed, \gls{xp} to the conventional \gls{irharq} from the perspective of attainable throughput. For the particular case of two transmission round, we obtain the optimal solution in closed-form.
\item %We derive the \gls{xp} throughput  when only the \gls{ack}/\gls{nack} packets are sent over the feedback channel. 
We also present an analytical formula for attainable throughput using heuristic rate-adaptation inspired by the numerical results and which presents a notable gain over the conventional \gls{irharq}.
\item To obtain an insight into the practical constraints on the system design, we also show the results obtained when a turbo coding is adopted.
\end{itemize}

% How the paper is organized
The remainder of the paper is organized as follows. We define the transmission model as well as the basic performance metrics in \secref{Sec:Model}. The idea of cross-packet coding  is explained in \secref{Sec:joint.codec}. The optimization of the rates in the proposed coding strategy is presented in \secref{Sec:Rate_Optimazation}. We discuss the effects of using a practical encoding/decoding schemes in \secref{Sec:Practical}. The numerical results are presented in form of short examples throughout the work to illustrate the main ideas. Conclusions are presented in \secref{Sec:Conclusions}. The optimization methods used to obtain the numerical results and the proof of decoding conditions are presented in appendices.

%!TEX root =  ../HARQ.joint.codec.tex
%%%%%%
\section{Channel model and HARQ}\label{Sec:Model}

We consider a point-to-point  \gls{irharq} transmission of a packet $\mfm$ over a block fading channel. After each transmission, using a feedback/acknowledgement channel, the receiver tells the transmitter whether the decoding of $\mfm$ succeeded (\gls{ack}) or failed (\gls{nack}). We thus assume that error detection is possible (\eg via \gls{crc} mechanisms) and that the feedback channel is error-free. For simplicity, we ignore any loss of resources due to the \gls{crc} and the acknowledgement feedback.

The transmission of a single packet may thus require many transmission rounds which continue till the $\kmax$th round is reached or till \gls{ack} is received. When $\kmax$ is finite, we say that \gls{harq} is \emph{truncated}, otherwise we say it is \emph{persistent}. We define a \gls{harq} \emph{cycle} as the sequence of transmission rounds of the same packet $\mfm$. 

The received signal in the $k$th round is given by
\begin{align}\label{y.x.z}
\by_{k}=\sqrt{\SNR_{k}}\bx_{k}+\bz_{k},\quad k=1,\ld, \kmax
\end{align}
where $\bz_{k}$ and $\bx_k$ modelling, respectively, the noise and the transmitted codeword are $\Ns$-dimensional vectors, each containing  \gls{iid} zero mean, unit-variance random variables;  $\SNR_{k}$ is thus  the \gls{snr} at the receiver. The elements of $\bz_k$ are drawn from complex Gaussian distribution, and elements of $\bx_{k}$ -- from the uniform distribution over the set (constellation) $\mcX$.

During the $k$th round, $\SNR_{k}$ is assumed to be perfectly known/estimated at the receiver and unknown at the transmitter; it varies from one round to another and we model $\SNR_{k}, k=1,\ld,\kmax$ as the \gls{iid} random variables $\SNRrv$ with distribution $\pdf_{\SNRrv}(\SNR)$. 

 %%%%
\subsection{Conventional \gls{harq}}\label{Sec:HARQ}

In the conventional \gls{irharq}, a packet $\mfm \in\set{0,1}^{\R\Ns}$ is firstly encoded into a codeword $\bx=\Phi[\mfm]\in\X^{K\Ns}$ composed of $K\Ns$ complex symbols taken from a constellation $\X$ where $\Phi[\cd]$ is the coding function and $\R$ denotes the \emph{nominal} coding rate per block.\footnote{We clearly define the nominal rate as the coding rate per channel block because \gls{harq} is a variable-rate transmission: the number of used channel blocks is random, and the final transmission rate is random as well.} Then, the codeword $\bx$ is divided into $K$ disjoint subcodewords $\bx_k$ composed of different symbols \ie $\bx=[\bx_1,\bx_2,\ld,\bx_\kmax]$. After each round $k$, the receiver try to decode the packet $\mfm$ concatenating all received channel outcomes till the $k$th block
\begin{align}\label{yk.x.z}
 \by_{[k]}=[\by_{1},\ld,\by_{k-1},\by_{k}].
\end{align}

Following \cite{Caire01,Nguyen12}, we assume $\Ns$ large enough to make the random coding limits valid. Then, knowing the \gls{mi} $I_k=\mfI(X_k;Y_k|\SNR_k)$ between the random variables $X_k$ and $Y_k$  modeling respectively, the channel input and output in the $k$th block, allows us to determine when the decoding is successful or not: the decoding failure occurs in the $k$th round  if the accumulated \gls{mi} at the receiver is smaller than the coding rate
\begin{align}\label{nack.def}
\nack_k&\triangleq\set{ \big(I_1<R\big) \wedge \big(\Isig_2<R\big) \wedge \ld\wedge \big(\Isig_k<R\big) }\\
&= \SET{\Isig_k<R},
\end{align}
where $\Isig_k\triangleq\sum_{l=1}^k I_l$ is the \gls{mi} accumulated in $k$ rounds. Of course, the \gls{mi} depends on the \gls{snr}, \ie $I_k\equiv I_k(\SNR_k)$.
 
\gls{irharq} can be modelled as a Markov chain where the transmission rounds correspond to the states, and the \gls{harq} cycle corresponds to a renewal cycle in the chain. % occuring. and with a guarantee of returning to the initial state after $\kmax$th round or after \gls{ack}, both terminating the \gls{harq} cycle. 
Thus, the long-term average throughput, defined as the average number of correctly received bits per transmitted symbol, may be calculated from the renewal-reward theorem: it is  a ratio between the average reward (number of bits successfully decoded per cycle) and the average renewal time (the expected number of transmissions needed to deliver the packet with up to $\kmax$ transmission rounds) \cite{Caire01}.

Let $f_k\triangleq\PR{\nack_k}, k\ge1$ be the probability of $k$ successive errors so the probability of successful decoding in the $k$th round is given by $\PR{\nack_{k-1}\wedge \Isig_k\ge\R }=f_{k-1}-f_k$ \cite{Caire01}. The throughput is then calculated as follows \cite{Caire01}
\begin{align}
\label{eta.rr.detail}
\eta^{\IR}_{\kmax}
&=\frac{R(1-f_1)+ R(f_1-f_2)+ \ld+R(f_{K-1}-f_K)}
{1\cd(1-f_1)+2\cd(f_1-f_2)+\ld+K\cd(f_{K-1})}\\
\label{eta.rr}
&=\frac{\R (1-f_K)}{1+\sum_{k=1}^{K-1}f_k}.
\end{align}

%Here we assumed also  that the maximum number of transmission rounds of the same packet may be limited to $K<\infty$, \ie the retransmissions are aborted after the event $\nack_K$.\footnote{In such a case, depending on the application requirements, the packet whose decoding failed may be kept in the transmitter's buffer and transmitted again, or it may be discarded.}

%We note here that for $K<\infty$, after the event $\nack_K$ the \gls{harq} cycle stops and then, depending on the application, the packet may be kept in the buffer and transmitted again, or it may be discarded. In both cases, the throughput expression \eqref{eta.rr} will remain valid \cite{Jabi16}.

Because the instantaneous \gls{csi} is not available at the transmitter, the highest achievable throughput is given by the ergodic capacity\footnote{We use the term ``capacity'' to denote the achievable rate for a \emph{given} distribution of $X$.} of the channel \cite{Caire01,Wu10}
\begin{align}\label{C.erg}
\ov{C}\triangleq \Ex_\SNRrv[ I(\SNRrv]).
\end{align} 

However, achieving $\ov{C}$ is not obvious: as shown in \cite{Caire01}, it can be done growing simultaneously $\R$ and $\kmax$ to infinity but this approach is impractical due to large memory requirements.

%%%%%
\begin{example}[Two-states channel]\label{Ex:2states}
Consider a block-fading channel where the \gls{mi} can only take two values, $I_\tnr{a}$ and $I_\tnr{b}$, where $\PR{I=I_\tnr{a}}=1-p$ and $\PR{I=I_\tnr{b}}=p$. The ergodic capacity is given by $\ov{C}=I_\tnr{a}(1-p)+I_\tnr{b} p$. We force the \gls{harq} to deliver the packet at most in the last transmission, \ie $f_K=0$, which means that we impose the constraints on the coding rate $\R\leq KI_\tnr{a}$ if we assume that $I_\tnr{a}<I_\tnr{b}$.

Assume  $I_\tnr{a}=1, I_\tnr{b}=1.5$, and $p=0.75$ so $\ov{C}=1.375$. For  $K=2, 3$ we easily calculate the throughput\footnote{For $\R\leq 1$ we obtain $f_1=0$. For $1<\R\leq 1.5$ -- $f_1=1-p$ and $f_2=0$. For $1.5<\R\leq2$ -- $f_1=1$, $f_2=0$, etc.} as
\begin{align}\label{eta.IR}
\eta^{\IR}_2&=
\begin{cases}
\R, &\text{if}\quad \R\leq1\\
0.8\R, &\text{if}\quad 1 <\R \leq 1.5\\
0.5\R, &\text{if}\quad 1.5 <\R \leq 2
\end{cases},
\end{align}
and
\begin{align}
\eta^{\IR}_3&=
\begin{cases}
\eta^{\IR}_2, &\text{if}\quad \R\leq 2\\
0.48\R, &\text{if}\quad 2 <\R \leq 2.5\\%f_1=1, f_2=(1-p)^2=1/16;  \eta=\R 16/33
0.41\R, &\text{if}\quad 2.5 <\R \leq 3 %f_1=1, f_2=1-(p)^2=1-9/16=7/16; \eta= \R 16/39
\end{cases}.
\end{align}

The optimum throughput-rate pairs are then $(\eta^{\IR}_2=1.2, \R=1.5)$ and $(\eta^{\IR}_3=1.23, \R=3)$. First, the benefit of using \gls{harq} is clear: we are able to transmit without errors with a finite number of channel blocks and go beyond the obvious limit of $I_\tnr{a}$. Second, we note that for $K=2$, after two transmissions, the accumulated \gls{mi} always satisfies $\Isig_2\geq 2$, while the condition $\Isig_2\geq 1.5$ is sufficient to decode the packet. This may be  seen as a ``waste'' which will be removed with the idea of cross-packet coding introduced in \secref{Sec:joint.codec}. 
\end{example}

\begin{example}[16QAM over Rayleigh fading channel]\label{Ex:QAM.Rayleigh}
Assume now that the transmission is done using symbols drawn uniformly from 16-points \gls{qam} constellation $\X$ \cite[Ch.~2.5]{Szczecinski_Alvarado_book} and that the channel gains follow Rayleigh distribution, \ie
\begin{align}\label{pdf.SNR}
\pdf_{\SNRrv}(\SNR)=1/\SNRav\exp(-\SNR/\SNRav),
\end{align}
where $\SNRav$ is the average \gls{snr}.

We calculate $I(\SNR)$ and the average $\ov{C}$ using the numerical methods outlined in \cite[Ch.~4.5]{Szczecinski_Alvarado_book} and compare it in \figref{Fig:Rayleigh.conventional} with the throughput $\eta^{\IR}_{\kmax}$ when $\kmax\in\set{2,\infty}$.\footnote{$\eta^{\IR}_{\infty}$ can be computed by taking $\kmax$ large enough in \eqref{eta.rr.detail} as suggested in  \cite{Larsson14} or by evaluating the throughput using the method outlined in the  \appref{Sec:OptimalAdaptation} and considering the policy $\pi(\mfs)=\R$ if $\mfs=(0,0)$ and $\pi(\mfs)=0$ otherwise. We opt for the later method.} The results indicate that i)~there is a significant loss with respect to the ergodic capacity when using truncated \gls{harq}, and ii)~increasing the number of transmission rounds ($K=\infty$) helps recovering the loss for a small-medium range of throughput (\eg for $\eta^{\IR}=1$ we gain $\sim 3\dBval$ and the gap to $\ov{C}$ is less than $1\dBval$), but it is less useful in the region of high $\eta^{\IR}_K$, \ie  in the vicinity of the maximum attainable throughput (\eg for $\eta^{\IR}=3$, we gain $1\dBval$  but the gap to $\ov{C}$ is still $\sim5\dBval$). We highlight this well-known effect \cite{Larsson14} to emphasize later the gains of the new coding strategy.
\end{example}

%================
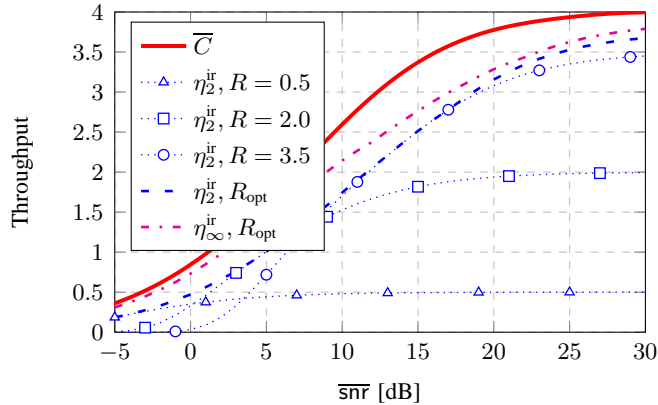
\begin{figure}
%    %!TEX root =  ../HARQ.joint.codec.tex
%
\centering
\newcommand{\lsize}{\footnotesize}
\pgfplotsset{tick label style={font=\footnotesize},
    label style={font=\footnotesize},
    legend style={font=\footnotesize}
}
\tikzset{every mark/.append style={mark size = 2, solid, fill=white, line width=0.5pt,}}
\begin{tikzpicture}%[trim axis left]
 \begin{axis}[
	axis lines = box,
	xlabel={\lsize $\SNRav$\dB},
	ylabel={\lsize  Throughput},
	xmin=-5,xmax=30,
	ymin=0,ymax=4,
	xtick={-5,0,...,30},
	ytick={0,0.5,..., 4},
	legend pos=north west,
	xmajorgrids=true,
	ymajorgrids=true,
	grid style=dashed,
	clip=true,
	legend style={
		draw=black,
		cells={anchor=west},
	},
]
%%%%
\addplot[color=red, line width=1.5pt,]
table[x=SNRdB, y=Ergodic]{./figures/16QAM.Ergodic.Capacity.Rayleigh.dat};
\addlegendentry{$\ov{C}$}
%%%% 
\addplot[draw=blue, line width=0.5pt,dotted,mark=triangle*,
mark repeat=6,mark phase=1,]
table[x=SNRdB, y=etaR.0.5]{./figures/16QAM.C.HARQ.truncated.Rayleigh.dat};
\addlegendentry{ $\eta^{\IR}_2, \R=0.5$}
%%%% 
\addplot[draw=blue, line width=0.5pt,dotted,mark=square*,
mark repeat=6,mark phase=3,]
table[x=SNRdB, y=etaR.2.0]{./figures/16QAM.C.HARQ.truncated.Rayleigh.dat};
\addlegendentry{$\eta^{\IR}_2, \R=2.0$}
%%%% 
\addplot[draw=blue, mark color=white, line width=0.5pt,dotted,mark=*,mark color=blue,
mark repeat=6,mark phase=5,]
table[x=SNRdB, y=etaR.3.5]{./figures/16QAM.C.HARQ.truncated.Rayleigh.dat};
\addlegendentry{$\eta^{\IR}_2, \R=3.5$}
%%%% 
\addplot[color=blue, line width=1pt,loosely dashed,]
table[x=SNRdB, y=etaopt]{./figures/16QAM.C.HARQ.truncated.Rayleigh.dat};
\addlegendentry{$\eta^{\IR}_2, R_\tr{opt}$}
%%%% 
\addplot[color=magenta, line width=1pt,loosely dashdotted,]
table[x=SNRdB, y=etaopt]{./figures/16QAM.C.HARQ.persistent.Rayleigh.dat};
\addlegendentry{$\eta^{\IR}_\infty, R_\tr{opt}$}

\end{axis}

\end{tikzpicture}
\caption{Throughput of the conventional \gls{irharq}, compared to the ergodic capacity, $\ov{C}$, in Rayleigh block-fading channel. The $R_\tr{opt}$ curve is an envelope of the throughputs $\eta^{\IR}_{\kmax}$ obtained with different coding rates per block $\R\in\set{0.25,0.5,\ld, 7.75}$.}\label{Fig:Rayleigh.conventional}
\end{figure}
%================

%!TEX root =  ../HARQ.joint.codec.tex
%%%%%%%%%%%%%%%%%%%%%%%%%%%%%%
\section{Cross-packet HARQ}\label{Sec:joint.codec}

The examples  shown previously indicate that the conventional coding cannot bring the throughput of  \gls{harq} close to the capacity unless the nominal coding rate $\R$ and the number of rounds $K$ increase. We would like now to exploit a new coding possibility consisting in joint coding of packets during the \gls{harq} cycle.

Let us start with the case of two transmission rounds. In the first round, we use the nominal rate $\R_1$ is used, \ie the packet $\mfm_1\in\set{0,1}^{R_1\Ns}$ is encoded 
\begin{align}\label{encoding.1}
\bx_1=\Phi_1[\mfm_1] \in \X^{\Ns},
\end{align}
and transmitted over the channel \eqref{y.x.z} producing $\by_1=\sqrt{\SNR_1}\bx_1+\bz_1$, where $\Phi_k[\cd]$ is the encoding at the $k$th round. 

If the packet $\mfm_1$ is decoded correctly (which occurs if $I_1\ge\R_1$), a new cycle \gls{harq} starts by the transmission of a new packet. However, if the decoding fails, the packet  $\mfm_{[2]}=[\mfm_1,\mfm_2]\in\Binary^{(R_1+R_2) \Ns}$ is encoded using a conventional code designed independently of the codebook corresponding to the first transmission
\begin{align}\label{conventional.code}
\bx_2=\Phi_2[\mfm_1,\mfm_2] \in \mcX^{\Ns},
\end{align}
which yields the channel outcome $\by_2=\sqrt{\SNR_2}\bx_2+\bz_2$ as depicted in \figref{Fig:multi_message}.\footnote{This coding strategy is introduced without any claim of  optimality. The undeniable advantage of using independently generated codebooks is the simplicity of implementation. We note that the idea of using $\Phi_2$ independent of $\Phi_1$ was also proposed in \cite{Trillingsgaard14,Nguyen15}.}

%%%%%%%%%%
\begin{figure}[tb]
\begin{center}
\resizebox{1\linewidth}{!}{\input{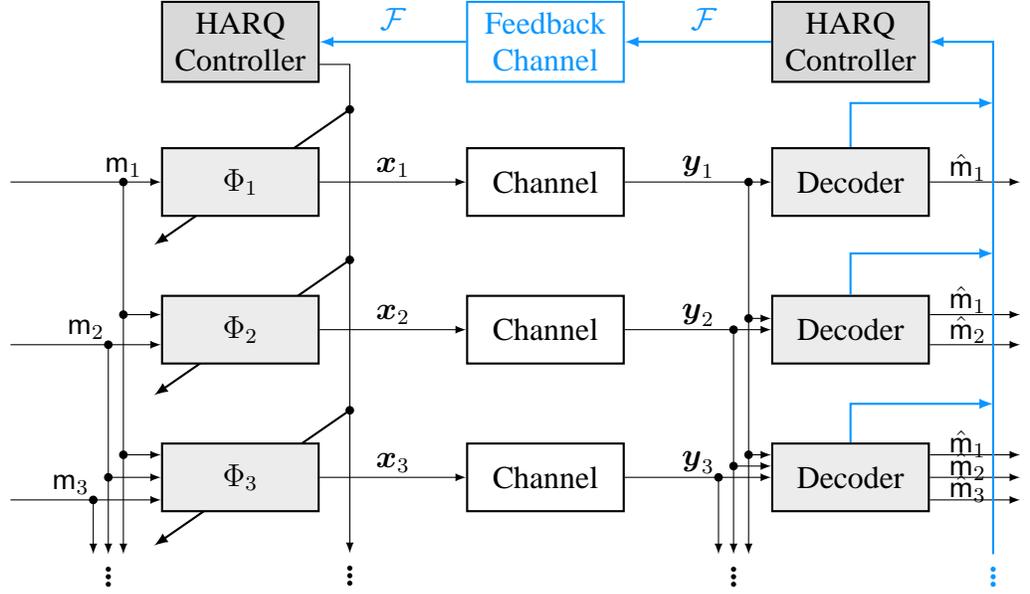}}
\end{center}
\caption{Model of the adaptive \gls{xp} transmission: the \gls{harq} controller uses the information $\mcF$ obtained over the feedback channel to choose the rate for the next round; $\mcF$ represent \gls{ack}/\gls{nack} acknowledgement in the case of one bit feedback, or, it carries the index of the coding rate in the case of rate-adaptive transmission (\secref{Sec:multibit}).}\label{Fig:multi_message}
\end{figure}
%%%%%%%%%%

Intuitively, by introducing  $\mfm_2$ we want to prevent the ``waste'' of \gls{mi}, which happens if $\Isig_2$ is much larger than $\R_1$, \cf \exref{Ex:2states}.  After the second transmission, the receiver decodes the packets $[\mfm_1, \mfm_2]$ using the observations $\by_{[2]}=[\by_1, \by_2]$. The codebook obtained after two transmissions is illustrated in \figref{Fig:codebook}. 
The associated decoding conditions based on the channel outcomes $\by_{[2]}$ are given by
\begin{align}
\label{dec.1}
\Isig_2=I_1+I_2&\ge R_1+R_2,\\ 
\label{dec.2}
I_2&\ge R_2,
\end{align}
where \eqref{dec.1} is a constraint over the sum-rate that guarantees the joint decoding of the packets pair $(\mfm_1, \mfm_2)$ while \eqref{dec.2} ensures the correct decoding of the packet $\mfm_2$. This means, the \gls{mi} must be accumulated to decode each of the packets even though the decoding is done jointly. The formal proof of \eqref{dec.1} and \eqref{dec.2} is presented in the Appendix~\ref{Sec:Decoding_Proof}. Similar decoding conditions were presented in the context of \gls{phy} security in \cite{LeTreust13}. 

While the event $\nack_1$ remains unchanged with respect to the conventional coding, the event $\nack_2$ means that $\nack_1$ occurred, as well as, that \eqref{dec.1} and \eqref{dec.2} are not satisfied
\begin{align}\nonumber
\nack_2&=\Big\{\bigl(  I_1<R_1 \bigr) \wedge \ov{\bigl((\Isig_2\ge \Rsig_2) \wedge (I_2\ge R_2) \bigr)}\Big\}\\
\label{nack.2.a}
&=\Big\{ ( I_1<R_1) \wedge \bigl( (\Isig_2<\Rsig_2) \vee ( I_2<R_2) \bigr)  \Big\} \\
\label{nack.2}
&=\Big\{\bigl(  I_1<R_1 \bigr) \wedge \bigl(\Isig_2<\Rsig_2\bigr)\Big\},
\end{align}
where  $\Rsig_k\triangleq\sum_{l=1}^k \R_l$ and the event $\ov{E}$ is the complement of $E$. 
To pass from \eqref{nack.2.a} to \eqref{nack.2} we used the decoding failure implication
\begin{align}\nonumber
\set{I_1<R_1 \wedge I_2<R_2}\implies \set{  I_1<R_1  \wedge \Isig_2<\Rsig_2},
\end{align}
which means that $\nack_1$ combined with \eqref{dec.1} implies \eqref{dec.2}.

The above conditions  generalize straightforwardly for any $k>1$ with $\R_k$ being the rate of the packet $\mfm_k$ added in the $k$th round
\begin{align}
\label{nack.k}
\nack_k=\set{\nack_{k-1} \wedge \bigl(\Isig_k<\Rsig_k  \bigr)}.
\end{align}

To calculate the throughput of such an \gls{xp}, we adopt a similar approach as in \eqref{eta.rr.detail} but we must account for the reward in the $k$ transmission round given by $\Rsig_k$, which  yields
\begin{align}\nonumber
\eta^{\xp}_{\kmax}
&=\frac{\Rsig_1(1-f_1)+ \Rsig_2(f_1-f_2)+ \ld+\Rsig_K(f_{K-1}-f_K)}
{(1-f_1)+2\cd(f_1-f_2)+\ld+K\cd(f_{K-1})}\\
\label{throughput.joint}
&=\frac{\sum_{k=1}^{K}R_k \big(f_{k-1}-f_K\big) }{1+\sum_{k=1}^{K-1} f_k}.
\end{align}
Here, again $f_k=\PR{\nack_k}, k\ge1$ with $\nack_k$ defined by \eqref{nack.k}.

As a sanity check we can set $R_k=0, k=2, \ld, K$, and  recover the conventional single-packet \gls{harq}, \ie \eqref{throughput.joint} will be equivalent to \eqref{eta.rr}.

The fundamental difference of the proposed \gls{xp} with respect to the conventional \gls{harq} appears now clearly in the numerator of \eqref{throughput.joint} which expresses the idea of variable rate transmission due to encoding of multiple packets. Nevertheless, not only the numerator changed with respect to \eqref{eta.rr} but also the denominator is different due to the new definition of $\nack_k$ in \eqref{nack.k}. 

%%%%%%%%%%%%%%%%%%%%%%
\begin{figure}[tb]
\psfrag{1}[l][l][0.7]{$1$}
\psfrag{Ns}[r][r][0.7]{$\Ns$}
\psfrag{Ns1}[l][l][0.7]{$\Ns+1$}
\psfrag{2Ns}[l][l][0.7]{$~2\Ns$}
\psfrag{2r1}[r][r][0.7]{$2^{R_1\Ns}$}
\psfrag{2r12}[l][l][0.7]{$2^{(R_1+R_2)\Ns}$}
\psfrag{vdots}[l][l][0.7]{$\vdots$}
\psfrag{PHI1}[l][l][0.7]{$\Phi_1$}
\psfrag{PHI2}[l][l][0.7]{$\Phi_2$}
\centering
\vskip 0.1in
\scalebox{1}{\includegraphics[width=0.5\linewidth]{./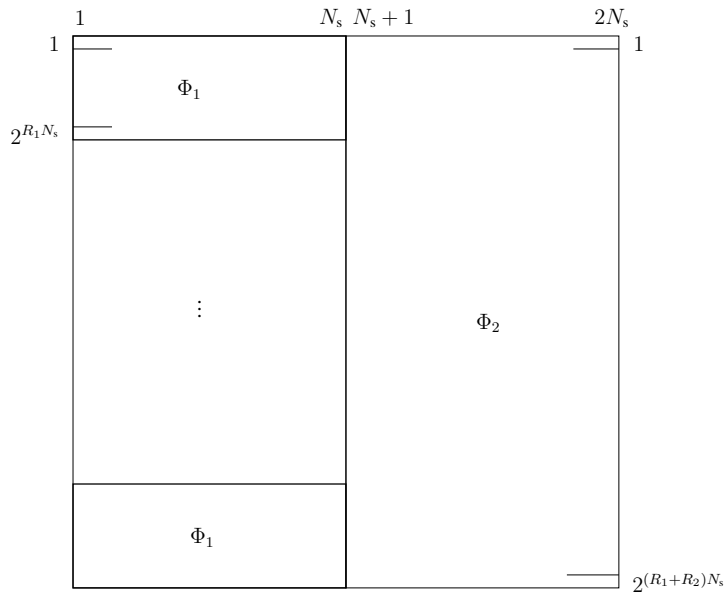}}
\caption{Illustration of the codebook defined through the coding function $\Phi_1$ in \eqref{encoding.1} and the joint coding function $\Phi_2$ in \eqref{conventional.code}. Each codeword composed of  $2\Ns$ symbols is indexed by the packet $\mf{m}_{[2]}$. The first $\Ns$ symbols are created without indexing by $\mf{m}_2$ so we artificially repeat them $2^{R_2\Ns}$ times to match the number of codewords in the codebook $\Phi_2$.}
\label{Fig:codebook}

\end{figure}
%%%%%%%%%%%%%%%%%%%%%%

%%%%%%%%
\begin{example}[Two-state channel and \gls{xp}]\label{Ex:2state.MP}
We consider now the proposed \gls{xp} in the scenario of \exref{Ex:2states}. Let us start, as before, with $\kmax=2$ and $R_1=1.5$. After a decoding failure (which means that we obtained $I_1=I_\tnr{a}=1$), we are free to define any rate $R_2$. In the absence of any formal criterion (more on that in \secref{Sec:Rate_Optimazation}), we take the following auxiliary (and somewhat ad-hoc) condition: we want to guarantee a non-zero successful decoding probability, \ie $f_2<1$. Here, since $\Isig_2\in(2,2.5)$, any  $R_2 \leq 1$ can ensure that $f_2<1$. In particular, if the rate  $R_2\leq 0.5$ we guarantee a much stronger condition $f_2=0$. 

For the case when $\kmax=2$ and using $R_2=0.5$, we obtain $f_1=0.25$ and $f_2=0$. The throughput is then given by
\begin{align}\label{eta.j.c.2}
\eta^{\xp}_2=\frac{\R_1+0.25\R_2  }{1+0.25}=1.3.
\end{align}

Thus, we used exactly the same channel resources as in the conventional \gls{harq}, obtained the same guarantee of successful decoding ($f_2=0$) after two transmission rounds, but the throughput is larger.

The difference is that, while we still have $\Isig_2\in(2,2.5)$, we now use $\Rsig_2=2$ to eliminated the ``waste'' of \gls{mi} in the conventional \gls{irharq}, where $\Rsig_2=1.5$. The improvement may be seen as the increase in the throughput (from $\eta^{\IR}_2=1.2$ to $\eta^{\xp}_2=1.3$) or as the reduction in the memory requirements (\ie we obtain a better throughput with smaller $K$, see $\eta^{\IR}_3=1.23$ in \exref{Ex:2states}). The price to pay for this advantages is the possible increase in complexity of cross-packet  encoding/decoding.

Similarly, for $\kmax=3$, we can use the larger value of $\R_2$ (that guarantees our objective of decodability, $f_2<1$), \ie $\R_2=1$. In this case, $f_1=0.25$, and $f_2=\PR{I_1<1.5 \wedge \Isig_2<2.5}=0.0625$. In the third transmission we observe $\Isig_{3}
\in(3,3.5)$ so, using $\R_3=0.5$, we obtain $f_3=0$ and thus the throughput is calculated as
\begin{align}\label{eta.j.c.3}
\eta_3^{\xp}=\frac{\R_1+0.25\R_2  + 0.0625\R_3}{1+0.25+0.0625}\approx 1.36,
\end{align}
which is already quite close to $\ov{C}=1.375$.
\end{example}

The improvement of the throughput in \gls{xp} is due to the way the codebook is constructed. While the conventional \gls{irharq}, see \secref{Sec:HARQ}, makes a rigid separation of the codewords into the fixed-content subcodewords -- an approach which is blind to the channel realizations, in \gls{xp} we match the information content of the codebook following the outcome of the transmissions.

%!TEX root =  ../HARQ.joint.codec.tex
%%%%%%
\section{Optimization of the coding rates}\label{Sec:Rate_Optimazation}

Our goal now is to evaluate how well the \gls{xp} can perform. To this end, we will have to find the optimal coding rates $\R_1, \R_2, \ld, \R_K$ which maximize throughput \eqref{throughput.joint}. 

Since the objective function is highly non linear, we will use the exhaustive search: for a truncated \gls{harq} this can be done with a manageable complexity. 

%-----------------------------------------------------
\begin{example}[16QAM, Rayleigh fading -- continued]\label{Ex:allocation}
In \figref{Fig:Rayleigh.joint.truncated.allocation} we show the results of the exhaustive-search optimization  of ${\eta}^{\xp}_\kmax$ with ${\eta}^{\IR}_\kmax$; for implementability, we limited the search space: \gls{irharq} uses \mbox{$\R_1\in\set{0,0.25,\ld,3.75}$} and \gls{xp} uses rates which satisfy $\Rsig_{\kmax} \leq \R_{\max}$, with \mbox{$\R_{\max}=8$}; $\R_{1}\in\set{0.25,\ld,3.75}$, $\R_{k}\in\set{0,0.25,\ld,3.75} ~ \forall k\in\set{2,\ld,\kmax}$. 

We used here an additional constraints requires each transmission to have non zero probability of being decodable, that is $\R_{k} < \log_{2} M, \forall k=1,\ld,\kmax$, where $M=16$. In fact, these constraints were always satisfied in \gls{xp} so they only affect \gls{irharq}; we will relax them in the next example.

In terms of \gls{snr} required to attain $\eta=3$, the gain of \gls{xp} over \gls{irharq} varies from $1.5\dBval$ (for $\kmax =2$) to $2.5\dBval$ (for $K =3$).

%================
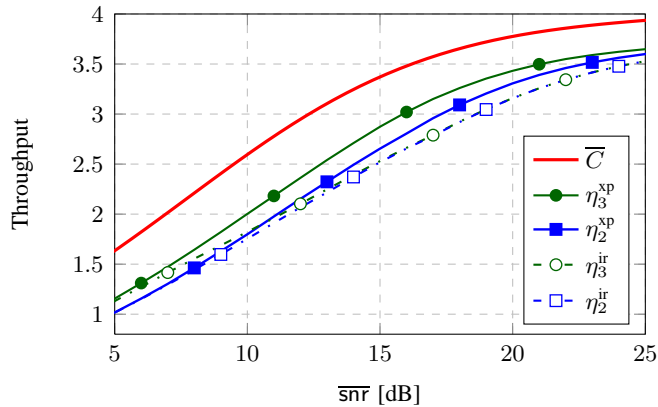
\begin{figure}[tb]
%    %!TEX root =  ../HARQ.joint.codec.tex
%

\centering
\newcommand{\lsize}{\footnotesize}
\pgfplotsset{tick label style={font=\footnotesize},
    label style={font=\footnotesize},
    legend style={font=\footnotesize}
}
%\tikzset{every mark/.append style={size=2,solid}}
\tikzset{every mark/.append style={mark size = 2.2, solid, fill=white, line width=0.5pt,}}
\begin{tikzpicture}%[trim axis left]
 \begin{axis}[
	axis lines = box,
	xlabel={\lsize $\SNRav$\dB},
	ylabel={\lsize  Throughput},
	xmin=5,xmax=25,
	ymin=0.8,ymax=4,
	xtick={5,10,...,30},
	ytick={0,0.5,..., 4},
	legend pos=south east,
	xmajorgrids=true,
	ymajorgrids=true,
	grid style=dashed,
	clip=true,
	legend style={
		draw=black,
		cells={anchor=west},
	},
]
%%%%
\addplot[color=red, line width=1.2pt,]
table[x=SNRdB, y=Ergodic]{./figures/16QAM.Ergodic.Capacity.Rayleigh.dat};
\addlegendentry{$\ov{C}$}
%%%% 

\addplot[color=black!60!green, line width=0.8pt,solid,mark=*,mark repeat=5,mark phase=2,mark options={black!60!green}]
table[x=SNRdB, y=Throughput3]{./figures/16QAM.Joint.Coding.Truncated.Rayleigh.One.bit.dat};
\addlegendentry{$\eta^{\xp}_3$}
%%%%
%\addplot[color=black!60!green, line width=0.8pt,dashdotted,mark=triangle*,mark repeat=7,mark phase=2,]
%table[x=SNRdB, y=Throughput]{./figures/16QAM.Joint.Coding.K.3.Rayleigh.dat};
%\addlegendentry{$\hat{\eta}^{\xp}_3$}

%
\addplot[color=blue,solid, line width=0.9pt,mark=square*,mark repeat=5,mark phase=4,mark options={blue}]
table[x=SNRdB, y=Throughput2]{./figures/16QAM.Joint.Coding.Truncated.Rayleigh.One.bit.dat};
\addlegendentry{$\eta^{\xp}_2$}
%
%\addplot[color=blue, line width=0.8pt,dashdotted,mark=*,mark repeat=7,mark phase=2,]
%table[x=SNRdB, y=Throughput]{./figures/16QAM.Joint.Coding.K.2.Rayleigh.dat};
%\addlegendentry{$\hat{\eta}^{\xp}_2$}
%%%%

%%%% 

%\addplot[color=YellowOrange, line width=0.9pt,mark=start,mark repeat=5,mark phase=4,]
%table[x=SNRdB, y=Throughput3]{./figures/16QAM.Greedy.Joint.Coding.Truncated.Rayleigh.One.bit.dat};
%\addlegendentry{$\hat{\eta}^{\xp}_3$}

%\addplot[color=magenta, line width=0.8pt,mark=otimes*,mark repeat=5,mark phase=5,]
%table[x=SNRdB, y=Throughput2]{./figures/16QAM.Greedy.Joint.Coding.Truncated.Rayleigh.One.bit.dat};
%\addlegendentry{$\hat{\eta}^{\xp}_2$}

%%%%

\addplot[color=black!60!green, line width=0.8pt,loosely dashdotted,mark=*,mark repeat=5,mark phase=3,]
table[x=SNRdB, y=Throughput3]{./figures/16QAM.C.HARQ.Truncated.Rayleigh.One.bit.dat};
\addlegendentry{$\eta^{\IR}_3$}

\addplot[color=blue, line width=0.8pt,loosely dashdotted,mark=square*,mark repeat=5,mark phase=5,]
table[x=SNRdB, y=Throughput2]{./figures/16QAM.C.HARQ.Truncated.Rayleigh.One.bit.dat};
\addlegendentry{$\eta^{\IR}_2$}

\end{axis}

\end{tikzpicture}
\caption{Throughput of the conventional \gls{irharq} ($\eta^{\IR}_\kmax$) compared to \gls{xp} ($\eta^{\xp}_\kmax$) in Rayleigh block-fading channel. The ergodic capacity ($\ov{C}$) is  shown for reference. }\label{Fig:Rayleigh.joint.truncated.allocation}
\end{figure}
%================
\end{example}
%-----------------------------------------------------

%%%%%
\subsection{Rate adaptation}\label{Sec:multibit}
The possibility of varying the rates during the \gls{harq} cycle opens new optimization space and we want to explore it fully following the idea of adapting the transmission parameters in \gls{harq} on the basis of obsolete \gls{csi} considered before, \eg in \cite{Cheng03,LiuR03,Visotsky05,Pfletschinger10,Kim11,Szczecinski13,Jabi15}.

The idea is  to \emph{adapt}  the coding rates using obsolete \gls{csi}s, $I_1, I_2, \ld, I_{k-1}$; this concept remains compatible with the assumption of transmitter operating without \gls{csi} knowledge because the obsolete \gls{csi}s $I_1, I_2, \ld, I_{k-1}$ cannot be used in the $k$th round to infer anything about $I_{k}$ (due to \gls{iid} model of the \gls{snr}s). 

Using this approach, the rate $\R_k$ will not only depend on the \gls{mi}s $I_1,\ld, I_{k-1}$ but also -- on the past rates $\R_1, \ld, \R_k$.\footnote{Through $\R_1, \Rsig_2,\ld, \Rsig_{k-1}$, which determine the probability of the decoding success, see \eqref{nack.k}.} This recursive dependence may be dealt with using the  \gls{mdp} framework, where the states of the Markov chain not only indicate the transmission number but also gather all information necessary to decide on the rate, which in the language of the \gls{mdp} is called an \emph{action}. The state has to be defined so that i)~knowing the action (chosen rate), the state-transition probability can be determined after each transmission, and ii)~the reward may be calculated knowing the state and the action. The state defined as a pair $\mfs_k=(\Rsig_k,\Isig_k)$ satisfies these two requirements, where we only need to consider the pairs which satisfy $\Rsig_k>\Isig_k$, otherwise the decoding is successful and the \gls{harq} cycle terminates.

Thus, the rate adaptation consists in finding the functions (called \emph{policies}), $\R_{l}(\mfs_{l-1})$ maximizing the throughput, which is found generalizing the expression \eqref{throughput.joint}
\begin{align}\label{eq.mm.ad}
\hat{\eta}^{\xp}_{\kmax}&=\frac{\Ex\big[\sum_{k=1}^{\kmax}\xi_{k} \Rsig_k\big]}{1+\sum_{k=1}^{\kmax-1}f_{k}},
\end{align}
where 
\begin{align}\label{xi.original}
\xi_{k}=\IND{I_1<\R_1 \wedge \ld \wedge \Isig_{k-1}<\Rsig_{k-1}\wedge \Isig_{k}\geq \Rsig_k},
\end{align}
indicates the successful decoding in the $k$th round, and 
\begin{align}\label{Rsig.state}
\Rsig_k=\Rsig_{k-1}+\R_{k}(\mfs_{k-1})
\end{align}
is the accumulated rate depending in a recursive fashion on the states of the Markov chain. The probability of $k$ successive errors, $f_k$, may be expressed as \eqref{nack.k} considering the dependence of the rates on the states given by \eqref{Rsig.state}. All the expectations are taken with respect to the states -- or equivalently -- with respect to $I_1,\ld, I_{\kmax}$.

The expression \eqref{eq.mm.ad} will be useful in \secref{Sec:Heuristics}, however, its maximization with respect to the policies $\R_{l}(\mfs_{l-1}), l=1,\ld, \kmax$ will be done using efficient specialized algorithms as explained in \appref{Sec:OptimalAdaptation}. In the particular case of two \gls{harq} rounds ($\kmax=2$), the optimal rate adaptation policy can be derived in closed form as shown in  \appref{Sec:OptimalAdaptation.Analytic}.

To run the optimization algorithms outlined in \appref{Sec:OptimalAdaptation}, we need to discretize the variables involved (states and actions). As for the rates (actions), we use a relatively course discretization step equal to $0.25$ and define the action space as the set $\mcR=\set{0.25,0.5,\ld, \R_{\max}}$. While the results are notably affected by $\R_{\max}$, using a finer discretization step did not change the results significantly. 

Here,  it is natural to ask a question about the signaling overhead due to proposed adaptation scheme. We thus note that while we assume the outdated \gls{mi}, $\Isig_k$ is discretized with a  high resolution when optimizing the throughput (\cf \appref{Sec:OptimalAdaptation}), the feedback load is affected by the cardinality of the action space, $\mcR$: the receiver knows the accumulated \gls{mi} but only transmits the index of the chosen rate.

\begin{example}[16QAM, Rayleigh fading channel -- continued]\label{Ex:Rayleigh.joint.adaptation}
The throughput of adaptive \gls{xp}, $\hat{\eta}^{\xp}$, is compared to the throughput of the conventional \gls{irharq} in \figref{Fig:Rayleigh.joint.persistent} for $K=\infty$, while \figref{Fig:Rayleigh.joint.truncated} shows the comparison for truncated \gls{harq}. 

Here, for \gls{irharq}, we removed the constraints on the initial coding rate, $\R_1<\log_2 M$,  which were applied in \exref{Ex:allocation}. It allows us to increase the throughput $\eta^{\IR}_3$ at the cost of first transmission not being decodable. In our view this is  a potentially serious drawback but we show such results to complement those already shown in \figref{Fig:Rayleigh.joint.truncated.allocation}, where the decodability condition was imposed. Again, \gls{xp} was insensitive to the decodability constraints and always provided results with decodable transmissions.

%================
\begin{figure}[tb] 
%    %!TEX root =  ../HARQ.joint.codec.tex
%
%\node[...,fill={rgb:red,4;green,2;yellow,1},...]
%\node[...,fill={rgb:red,1;green,2;blue,5},...]
%\node[...,fill={rgb:orange,1;yellow,2;pink,5},...]
%\node[...,fill={rgb:black,1;white,2},...]
\centering
\newcommand{\lsize}{\footnotesize}
\pgfplotsset{tick label style={font=\footnotesize},
    label style={font=\footnotesize},
    legend style={font=\footnotesize}
}
%\tikzset{every mark/.append style={size=2,solid}}
\tikzset{every mark/.append style={mark size = 2, solid, fill=white, line width=0.5pt,}}
\begin{tikzpicture}%[trim axis left]
 \begin{axis}[
	axis lines = box,
	xlabel={\lsize $\SNRav$\dB},
	ylabel={\lsize  Throughput},
	xmin=5,xmax=30,
	ymin=1.3,ymax=4,
	xtick={5,10,...,30},
	ytick={0,0.5,..., 4},
	legend pos=south east,
	xmajorgrids=true,
	ymajorgrids=true,
	grid style=dashed,
	clip=true,
	legend style={
		draw=black,
		cells={anchor=west},
	},
]
%%%%
\addplot[color=red, line width=1.2pt,]
table[x=SNRdB, y=Ergodic]{./figures/16QAM.Ergodic.Capacity.Rayleigh.dat};
\addlegendentry{$\ov{C}$}
%%%% 
\addplot[color=magenta,  line width=0.8pt, solid,mark=triangle*,mark repeat=7,mark phase=4, mark size=2.25, mark options={magenta}]
table[x=SNRdB, y=Throughput2]{./figures/16QAM.Joint.Coding.Persistent.Rayleigh.dat};
\addlegendentry{$\hat{\eta}^{\xp}_\infty,\R_{\max}=16$}
\addplot[color=cyan, line width=0.8pt, solid ,mark=diamond*,mark repeat=7,mark phase=3,mark size=2.25,mark options={cyan}]
table[x=SNRdB, y=Throughput1]{./figures/16QAM.Joint.Coding.Persistent.Rayleigh.dat};
\addlegendentry{$\hat{\eta}^{\xp}_\infty,\R_{\max}=8$}
%%%% 
\addplot[color=magenta,  line width=0.8pt,dashdotted, mark=triangle*,mark repeat=6,mark size=2.25,mark phase=4]
table[x=SNRdB, y=Throughpu2]{./figures/16QAM.C.HARQ.Persistent.Rayleigh.Different.R.max.dat};
\addlegendentry{$\eta^{\IR}_\infty,\R_{\max}=16$}
\addplot[color=cyan,  line width=0.8pt,dashdotted,mark=diamond*,mark repeat=7,mark size=2.25]
table[x=SNRdB, y=Throughpu1]{./figures/16QAM.C.HARQ.Persistent.Rayleigh.Different.R.max.dat};
\addlegendentry{$\eta^{\IR}_\infty,\R_{\max}=8$}

\end{axis}

\end{tikzpicture}
\caption{Optimal throughput of the conventional \gls{irharq} ($\eta^{\IR}_\infty$) compared to the proposed \gls{xp} ($\hat{\eta}^{\xp}_\infty$) in Rayleigh block-fading channel.  The ergodic capacity ($\ov{C}$) is  shown for reference. }\label{Fig:Rayleigh.joint.persistent}
\end{figure}
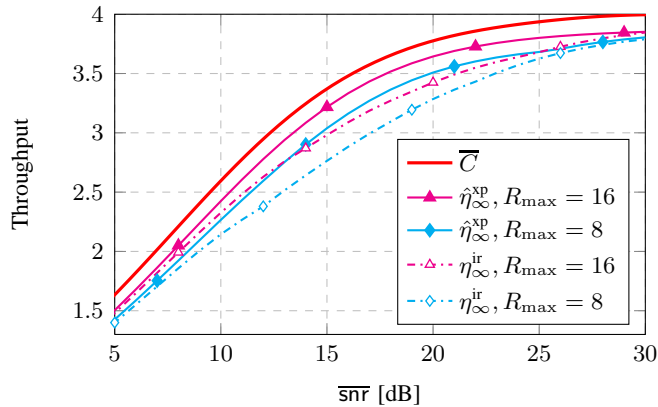
%================

%================
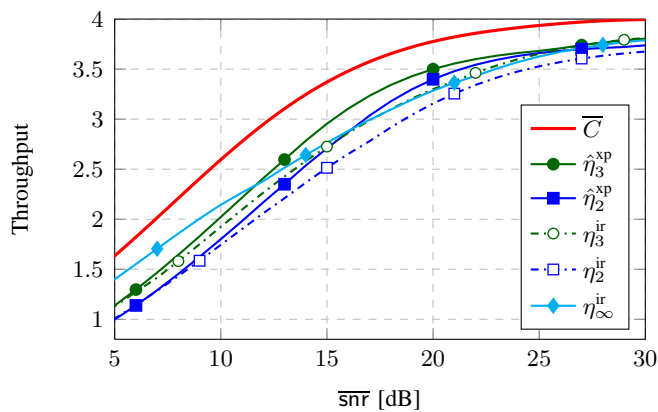
\begin{figure}[tb]
%    %!TEX root =  ../HARQ.joint.codec.tex
%

\centering
\newcommand{\lsize}{\footnotesize}
\pgfplotsset{tick label style={font=\footnotesize},
    label style={font=\footnotesize},
    legend style={font=\footnotesize}
}
%\tikzset{every mark/.append style={size=2,solid}}
\tikzset{every mark/.append style={mark size = 2, solid, fill=white, line width=0.5pt,}}
\begin{tikzpicture}%[trim axis left]
 \begin{axis}[
	axis lines = box,
	xlabel={\lsize $\SNRav$\dB},
	ylabel={\lsize  Throughput},
	xmin=5,xmax=30,
	ymin=0.8,ymax=4,
	xtick={5,10,...,30},
	ytick={0,0.5,..., 4},
	legend pos=south east,
	xmajorgrids=true,
	ymajorgrids=true,
	grid style=dashed,
	clip=true,
	legend style={
		draw=black,
		cells={anchor=west},
	},
]
%%%%
\addplot[color=red, line width=1.2pt,]
table[x=SNRdB, y=Ergodic]{./figures/16QAM.Ergodic.Capacity.Rayleigh.dat};
\addlegendentry{$\ov{C}$}
%%%% 
%\addplot[color=black!60!green, line width=0.8pt,densely dashdotted,mark=square*,mark repeat=7,mark phase=3,]
%table[x=SNRdB, y=Throughput1]{./figures/16QAM.Joint.Coding.Persistent.Rayleigh.dat};
%\addlegendentry{$\eta^\tr{mm}_\infty$}
\addplot[color=black!60!green, line width=0.8pt, solid,mark=*,mark repeat=7,mark phase=2,mark options={black!60!green}]
table[x=SNRdB, y=Throughput]{./figures/16QAM.Joint.Coding.K.3.Rayleigh.dat};
\addlegendentry{$\hat{\eta}^{\xp}_3$}
\addplot[color=blue, line width=0.8pt, solid,mark=square*,mark repeat=7,mark phase=2,mark options={blue}]
table[x=SNRdB, y=Throughput]{./figures/16QAM.Joint.Coding.K.2.Rayleigh.dat};
\addlegendentry{$\hat{\eta}^{\xp}_2$}
%%%% 
\addplot[color=black!60!green, line width=0.8pt,dashdotted,mark=*,mark repeat=7,mark phase=4,]
table[x=SNRdB, y=Throughput]{./figures/16QAM.C.HARQ.K.3.Rayleigh.dat};
\addlegendentry{$\eta^{\IR}_3$}
\addplot[color=blue, line width=0.8pt,dashdotted,mark=square*,mark repeat=6,mark phase=5,]
table[x=SNRdB, y=Throughput]{./figures/16QAM.C.HARQ.K.2.Rayleigh.dat};
\addlegendentry{$\eta^{\IR}_2$}
%%%%
\addplot[color=cyan, line width=0.8pt,mark=diamond*,mark repeat=7,mark phase=3,mark size=2.5,mark options={cyan}]
table[x=SNRdB, y=Throughpu1]{./figures/16QAM.C.HARQ.Persistent.Rayleigh.Different.R.max.dat};
\addlegendentry{$\eta^{\IR}_\infty$}
\end{axis}

\end{tikzpicture}
\caption{Throughput of the conventional \gls{irharq} ($\eta^{\IR}_\kmax$) compared to the proposed \gls{xp} ($\hat{\eta}^{\xp}_\kmax$) for a truncated \gls{harq}, $\kmax\in\set{2,3}$ in Rayleigh block-fading channel;  $\R_{\max}=8$. The ergodic capacity ($\ov{C}$) and the optimal throughput of the persistent conventional \gls{irharq} ($\eta^{\IR}_\infty$) are  shown for reference.}\label{Fig:Rayleigh.joint.truncated}
\end{figure}
%================
The improvements due to adaptive \gls{xp} are most notable for high values of the throughput. In particular we observe that
\begin{itemize}
\item The persistent \gls{xp} halves the gap between the ergodic capacity and the conventional \gls{irharq}. For example, the \gls{snr} gap  between $\hat{\eta}^{\xp}_\infty=3$ and the ergodic capacity, $\ov{C}=3$ is reduced by more than $50\%$ when comparing to the gap between $\eta^{\IR}_\infty=3$ and $\ov{C}=3$ which is equal to $5\dBval$ when $\R_{\max}=8$. We note that the throughput of \gls{xp} increases when $\R_{\max}$ increases: the \gls{snr} gap between $\ov{C}$ and $\hat{\eta}^{\xp}_\infty$ is reduced by half when $\R_{\max}=16$ is used instead of $\R_{\max}=8$.
\item For any value of throughput $\eta>3$, two rounds of \gls{xp} yield higher throughput than the conventional persistent \gls{irharq}. Thus, in this operation range we may improve the performance and yet decrease the memory requirements at the receiver.
\end{itemize}

\end{example}

%%%%%%%
\subsection{Heuristic adaptation policy}\label{Sec:Heuristics}

%================
\begin{figure}[tb] 
%    %!TEX root =  ../HARQ.joint.codec.tex
%
%\node[...,fill={rgb:red,4;green,2;yellow,1},...]
%\node[...,fill={rgb:red,1;green,2;blue,5},...]
%\node[...,fill={rgb:orange,1;yellow,2;pink,5},...]
%\node[...,fill={rgb:black,1;white,2},...]
\centering
\newcommand{\lsize}{\footnotesize}
\pgfplotsset{tick label style={font=\footnotesize},
    label style={font=\footnotesize},
    legend style={font=\footnotesize}
}
%\tikzset{every mark/.append style={size=2,solid}}
\tikzset{every mark/.append style={mark size = 2, solid, fill=white, line width=0.5pt,}}
\begin{tikzpicture}%[trim axis left]
 \begin{axis}[
	axis lines = box,
	xlabel={\lsize $\Rsig_{k-1}-\Isig_{k-1}$},
	ylabel={\lsize  $\R_{k}$},
	xmin=0.1270,xmax=4,
	ymin=0,ymax=4,
	xtick={0,0.5,...,4},
	ytick={0,0.5,..., 8},
	legend pos=north east,
	xmajorgrids=true,
	ymajorgrids=true,
	grid style=dashed,
	clip=true,
	legend style={
		draw=black,
		cells={anchor=west},
	},
]
%%%%
\addplot[color=red, line width=1.2pt,]
table[x=MI1, y=policy1]{./figures/16QAM.Joint.Coding.Persistent.Rayleigh.Policy.dat};
\addlegendentry{$\Rsig_{k-1}=2.5$}
%%%% 
\addplot[color=blue, line width=1.2pt,dashdotted]
table[x=MI2, y=policy2]{./figures/16QAM.Joint.Coding.Persistent.Rayleigh.Policy.dat};
\addlegendentry{$\Rsig_{k-1}=3.5$}
%%%% 
\addplot[color=green, line width=1.2pt,dashed]
table[x=MI3, y=policy3]{./figures/16QAM.Joint.Coding.Persistent.Rayleigh.Policy.dat};
\addlegendentry{$\Rsig_{k-1}=5$}
%%%% 
\addplot[color=magenta, line width=1.2pt,dotted]
table[x=MI4, y=policy4]{./figures/16QAM.Joint.Coding.Persistent.Rayleigh.Policy.dat};
\addlegendentry{$\Rsig_{k-1}=6$}
%%%% 

%\addplot[color=magenta,  line width=0.8pt, solid,mark=triangle*,mark repeat=7,mark phase=2,]
%table[x=SNRdB, y=Throughput2]{./figures/16QAM.Joint.Coding.Persistent.Rayleigh.dat};
%\addlegendentry{$\hat{\eta}^{\xp}_\infty,\R_{\max}=16$}
%\addplot[color=cyan, line width=0.8pt, solid ,mark=diamond*,mark repeat=7,mark phase=3,]
%table[x=SNRdB, y=Throughput1]{./figures/16QAM.Joint.Coding.Persistent.Rayleigh.dat};
%\addlegendentry{$\hat{\eta}^{\xp}_\infty,\R_{\max}=8$}
%%%%% 
%\addplot[color=magenta,  line width=0.8pt,dashdotted, mark=triangle*,mark repeat=6,mark phase=4,]
%table[x=SNRdB, y=Throughpu2]{./figures/16QAM.C.HARQ.Persistent.Rayleigh.Different.R.max.dat};
%\addlegendentry{$\eta^{\IR}_\infty,\R_{\max}=16$}
%\addplot[color=cyan,  line width=0.8pt,dashdotted,mark=diamond*,mark repeat=7,mark phase=2,]
%table[x=SNRdB, y=Throughpu1]{./figures/16QAM.C.HARQ.Persistent.Rayleigh.Different.R.max.dat};
%\addlegendentry{$\eta^{\IR}_\infty,\R_{\max}=8$}

\end{axis}

\end{tikzpicture}
\caption{Optimal rate $R_k$ as a function of $\Rsig_{k-1}-\Isig_{k-1}$ for different values of $\Rsig_{k-1}$; $\kmax=\infty$, $\SNRav=20\dBval$, $\R_{\max}=8$.}\label{Fig:Strategy.persistent}
\end{figure}
%================

\figref{Fig:Strategy.persistent} shows the optimal rate adaptation as a function of $\Rsig_{k-1}-\Isig_{k-1}$ for different values of $\Rsig_{k-1}$, where we note a quasi-linear behaviour of the adaptation function with the saturation which occurs to guarantee $\Rsig_{k-1}+\R_k\leq\R_{\max}$. 

To exploit this very regular form, which was also observed solving the related problems  in \cite{Szczecinski13,Jabi15}, we propose to use the following heuristic function inspired by \figref{Fig:Strategy.persistent}
\begin{align}\label{Rk.heuristic}
\R_k=\R_1- (\Rsig_{k-1}-\Isig_{k-1}),
\end{align}
where only the rate $\R_1$ needs to be optimized (from \figref{Fig:Strategy.persistent} we find $\R_1\approx3.5$). Furthermore, applying \eqref{Rk.heuristic} recursively we obtain  $\R_2=I_1, \R_3=I_2, \ld, \R_k=I_{k-1}$; the identical rate-adaptation strategy may be derived from \cite[Sec.~III]{Popovski14}.

The simplicity of the adaptation function allows us now to evaluate analytically the throughput of \gls{xp}. To this end we need to calculate $f_l$ in the denominator of \eqref{eq.mm.ad} and the expectation in its numerator.

We first note that, from \eqref{Rk.heuristic} we obtain 
\begin{align}\label{}
\big(\Isig_k&<\Rsig_k\big) \iff  (I_k<R_1),
\end{align}
which means that the probability of decoding failure does not change with the index of the transmission round. Thus
\begin{align}\label{eq.fk.heur}
f_{k}&=(f_{1})^k,
\end{align}
and \eqref{xi.original} may be formulated as
\begin{align}\label{eq.xi.heur}
\xi_{k}=\Big(\prod_{l=1}^{k-1}\IND{I_l<\R_1}\Big)\IND{I_{k}\geq \R_1}.
\end{align}

From \eqref{Rk.heuristic} we also obtain $\Rsig_k=\R_1 + \sum_{l=1}^{k-1}I_l$, which allows us to calculate the expectation in the numerator of \eqref{eq.mm.ad} as
\begin{align}\label{}
\Ex[\xi_{k} \Rsig_k]
&=\Ex[\xi_{k} (\R_1 + I_1+\ld, I_{k-1})]\\
\label{xi.final}
&=\big(\R_1f_1 +(k-1)\tilde{C}\big)(f_1)^{k-2}(1-f_1),
\end{align}
where $\tilde{C}=\Ex_{I_1}\big[I_{1}\cd\IND{I_{1}<\R_{1}}\big]$ is a ``truncated'' expected \gls{mi}.

Using \eqref{xi.final} and \eqref{eq.fk.heur} in \eqref{eq.mm.ad}, the throughput is calculated as 
\begin{align}\label{eq.xi.heur.final}
\tilde{\eta}^\xp_\kmax&=\R_{1}(1-f_{1})+\frac{\tilde{C}(1-f_{1})}{1-f_{1}^\kmax} \nonumber\\
&\qquad\times \Big(-(\kmax-1)f_{1}^{\kmax-1}+\frac{1-f_{1}^{\kmax-1}}{1-f_{1}} \Big).
\end{align}

In the limit, $\kmax\rightarrow\infty$, \eqref{eq.xi.heur.final} becomes
\begin{align}\label{eq.xi.heur.infini}
\tilde{\eta}^{\xp}_\infty=\R_{1}(1-f_{1})+\tilde{C},
\end{align}
which is the same as \cite[Eq.~(12)]{Popovski14}.

%================
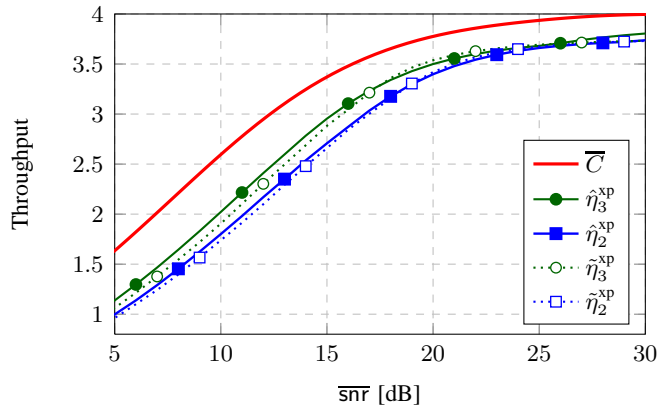
\begin{figure}
%    %!TEX root =  ../HARQ.joint.codec.tex
%

\centering
\newcommand{\lsize}{\footnotesize}
\pgfplotsset{tick label style={font=\footnotesize},
    label style={font=\footnotesize},
    legend style={font=\footnotesize}
}
%\tikzset{every mark/.append style={size=2,solid}}
\tikzset{every mark/.append style={mark size = 2, solid, fill=white, line width=0.5pt,}}
\begin{tikzpicture}%[trim axis left]
 \begin{axis}[
	axis lines = box,
	xlabel={\lsize $\SNRav$\dB},
	ylabel={\lsize  Throughput},
	xmin=5,xmax=30,
	ymin=0.8,ymax=4,
	xtick={5,10,...,30},
	ytick={0,0.5,..., 4},
	legend pos=south east,
	xmajorgrids=true,
	ymajorgrids=true,
	grid style=dashed,
	clip=true,
	legend style={
		draw=black,
		cells={anchor=west},
	},
]
%%%%
\addplot[color=red, line width=1.2pt,]
table[x=SNRdB, y=Ergodic]{./figures/16QAM.Ergodic.Capacity.Rayleigh.dat};
\addlegendentry{$\ov{C}$}
%%%% 
%\addplot[color=black!60!green, line width=0.8pt,densely dashdotted,mark=square*,mark repeat=7,mark phase=3,]
%table[x=SNRdB, y=Throughput1]{./figures/16QAM.Joint.Coding.Persistent.Rayleigh.dat};
%\addlegendentry{$\eta^\tr{mm}_\infty$}
\addplot[color=black!60!green, line width=0.8pt,solid,mark=*,mark repeat=5,mark phase=2,mark options={black!60!green}]
table[x=SNRdB, y=Throughput]{./figures/16QAM.Joint.Coding.K.3.Rayleigh.dat};
\addlegendentry{$\hat{\eta}^\xp_3$}

\addplot[color=blue,solid, line width=0.9pt,mark=square*,mark repeat=5,mark phase=4,mark options={blue}]
table[x=SNRdB, y=Throughput]{./figures/16QAM.Joint.Coding.K.2.Rayleigh.dat};
\addlegendentry{$\hat{\eta}^\xp_2$}
%%%% 

\addplot[color=black!60!green, line width=0.8pt,dotted,mark=*,mark repeat=5,mark phase=3,]
table[x=SNRdB, y=Throughput3]{./figures/16QAM.Petar.truncated.Rayleigh.Multi.Bit.dat};
\addlegendentry{$\tilde{\eta}_3^\xp$}

\addplot[color=blue, line width=0.8pt,dotted,mark=square*,mark repeat=5,mark phase=5,]
table[x=SNRdB, y=Throughput2]{./figures/16QAM.Petar.truncated.Rayleigh.Multi.Bit.dat};
\addlegendentry{$\tilde{\eta}_2^\xp$}

\end{axis}

\end{tikzpicture}
\caption{Throughput  of the optimal \gls{xp} ($\hat{\eta}^\xp_\kmax$) is compared to the throughput of \gls{xp} with the heuristic policy ($\tilde{\eta}^\xp_\kmax$) in Rayleigh block-fading channel. The ergodic capacity ($\ov{C}$) is shown for reference.}\label{Fig:Rayleigh.joint.vs.Petar.truncated.adaptation}
\end{figure}
%================

\begin{example}[16QAM, Rayleigh fading -- continued]\label{Ex:Rayleigh.joint.vs.BRQ.adaptation.and.allocation}
We compare in \figref{Fig:Rayleigh.joint.vs.Petar.truncated.adaptation} the throughput of optimal  \gls{xp} with the heuristic policy \eqref{Rk.heuristic}, which is optimized over $\R_{1}$. As expected, the optimal solution outperforms the heuristic policy but the gap is very small (less than $0.5\dBval$). Moreover, since  $\hat{\eta}^\xp_\kmax$ was optimized over a finite set of rates $\mcR=\set{0.25,0.5,\ld,\R_{\max}}$, and the heuristic policy assumes that $\mcR$ is continuous and unbounded, $\tilde{\eta}^\xp_\kmax$ slightly outperforms $\hat{\eta}^\xp_\kmax$ above $\SNRav=20\dBval$. This gap can be reduced increasing the value of $\R_{\max}$; decreasing the discretisation step below $0.25$ had much lesser influence on the results.\end{example}
%%%---------------------------------------------------------------------------------------------------------

The results are quite intriguing and suggesting  that the strategy of \cite{Popovski14} based on a double-layer encoding\footnote{\cite{Popovski14} proposes double-step encoding: to form $\mfm_{[k]}$ the bits $\mfm_{k}$ and the parity bits of $\mfm_{[k-1]}$ are first ``mixed'', and next, the channel encoder is used.} and a transmission-by-transmission decoding (as opposed to the joint decoding required in \gls{xp}), asymptotically yield the same throughput as the heuristic cross-packet \gls{harq}, whose throughput is also very close to the optimal \gls{xp}.

We cannot follow that path here but this  relationship should be studied in more details; in particular, the effect of removing the idealized assumption of using a continuous set of rates $\mcR$, necessary to implement \eqref{Rk.heuristic}, should be analyzed.

%!TEX root =  ../HARQ.joint.codec.tex
\section{Example of a practical implementation}\label{Sec:Practical}

Until now, we have adopted the perfect decoding assumption, \ie the decoding error in the $k$th round is equivalent to the event $\set{I_1<\R_1\wedge\ld\wedge \Isig_{k}<\Rsig_{k}}$. We will remove now this idealization to highlight also the practical aspect of \gls{xp}. 

We thus implement the cross-packet encoders in \figref{Fig:multi_message} using turbo encoders. To this end, as shown in \figref{Fig:HARQ.Turbo} we separate each encoder $\Phi_k$ into i)~a bit-level multiplexer, $\mcM$, whose role is to interleave the input packets $\mfm_1, \ld, \mfm_k$ and produce the packet, $\mfm_{[k]}$,  ii)~a conventional turbo-encoder (TC), iii)~the rate-matching puncturer, $\mcP$, which ensures that all binary codewords $\bc_k$ have the same length, $\Nc$, and iv)~a modulator, which maps the codewords $\bc_k$ onto the codewords $\bx_k$ from the constellation $\mcX$;  since we use $16$ary \gls{qam}, $\Nc=\Ns \log_2(M)$.

The multiplexers $\mcM_k$ are implemented using pseudo-random interleaving. The encoders (TC) are constructed via parallel concatenation of two recursive convolutional encoders with polynomials $[13/15]_8$. Each TC produces a $N_{\tr{b},[k]}=\Ns\R_1+\ld+\Ns\R_k$ systematic (input) bits and $N_{\tr{p}}=2N_{\tr{b},[k]}$ parity bits $\bp_k$.\footnote{We neglects the effect of the trellis terminating bits.} The bits $\bc_k$ are obtained concatenating ``fresh'' systematic bits $\mfm_k$ (those which  were not transmitted in the previous rounds) and the parity bits selected from $\bp_k$ via a periodic puncturing.

Such a construction of the encoders is of course not optimal and better interleavers and puncturers may be sought; however, their optimal design represents a challenge of its own and must be considered out of scope of the example we present here.

The encoding is rather straightforward and can be implemented using conventional elements. The decoding in the $k$th round is slightly more involved because it is done using outcomes of all transmissions, $\by_{[k]}$. From this perspective, we may see the binary codewords $\bc_1, \ld, \bc_k$ as an outcome of $2k$ concatenated convolutional encoders (two encoders per \gls{harq} round), each producing the sequence with increasing lengths. The decoding of multiple encoding units was already addressed before  \cite{huettinger2002design}\cite{divsalar1995multiple} and requires implementation of $2k$ \gls{bcjr} decoders (one for each of the encoders) exchanging the extrinsic probabilities for the information bits. We implement the serial scheduling, that is, once a \gls{bcjr} decoder is activated, it must wait  till all other \gls{bcjr} decoders are activated. One iteration is defined as $2k$ activations. The results we present are obtained using algorithm from the library \cite{Doray15}; we use $\Ns=1024$ and four decoding iterations.

%%----------------------------------------------------------------------------
\usetikzlibrary{arrows}
\usetikzlibrary{shapes}
\usetikzlibrary{positioning}
\usetikzlibrary{calc}
\usetikzlibrary{shapes.misc}
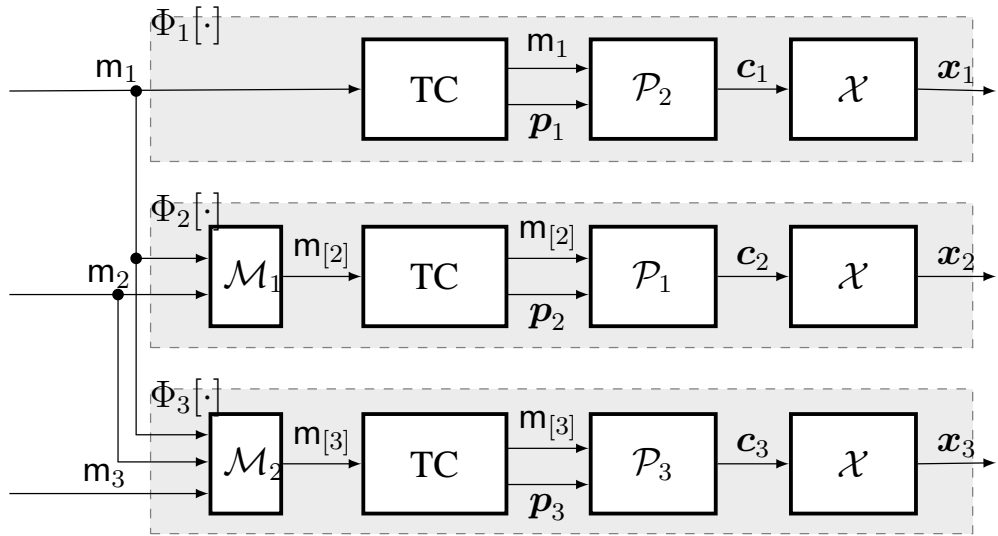
\begin{figure}[bt]
\begin{center}
\resizebox{1\linewidth}{!}{\pgfdeclarelayer{background}
\pgfdeclarelayer{foreground}
\pgfsetlayers{background,main,foreground}

\definecolor{LHCblue}{RGB}{10, 150, 255}

% Define a few styles and constants
\tikzstyle{form0} = [draw, minimum width=0cm, text width=0cm, 
  text centered,  minimum height=0cm]

\tikzstyle{form1} = [draw,  thick, minimum width=1cm, text width=1.3cm,
  text centered,  minimum height=1.1cm ,very thick, fill=white]
  
  \tikzstyle{form01} = [draw , color=gray,minimum width=1cm, text width=8.8cm,
  text centered,  minimum height=1.6cm , dashed , fill=gray!15]
  
\tikzstyle{form11} = [draw, minimum width=0.9cm, text width=1.1cm,
  text centered,  minimum height=1.1cm ,very thick,  fill=white]

\tikzstyle{form2} = [draw=none, minimum width=3.4cm, text width=1.5cm, fill=none, 
  text centered,  minimum height=0cm]    

\tikzstyle{form3} = [color=LHCblue,draw, thick,minimum width=1.6cm, text width=1.5cm, 
  text centered,  minimum height=1.4cm]
  
  \tikzstyle{form4} = [draw=none, minimum width=0cm, text width=0cm, fill=none, 
  text centered,  minimum height=.6cm] 
  
    \tikzstyle{form5} = [draw, minimum width=.4cm, text width=.5cm,
text centered,  minimum height=1.1cm, very thick, fill=white]

 \tikzstyle{form05} = [draw=none, minimum width=.4cm, text width=.3cm,
text centered,  minimum height=0.5cm, very thick , color=black, fill=none]

    \tikzstyle{form7} = [draw, minimum width=0.8cm, text width=0.8cm,fill=blue!20, 
text centered,  minimum height=0.6cm]

     \tikzstyle{form6} = [draw, minimum width=2cm, text width=.4cm, 
  text centered,  minimum height=1.5cm]
  
      \tikzstyle{form8} = [draw, minimum width=1.8cm, text width=1.8cm, fill=green!50!yellow!15,
  text centered,  minimum height=2.5cm]
  
   \tikzstyle{form9} = [draw, minimum width=2.5cm, text width=3cm, fill=green!50!yellow!15,
  text centered,  minimum height=4cm]
  
    \tikzstyle{form10} = [draw, minimum width=0.8cm, text width=0.8cm,fill=yellow!40, 
text centered,  minimum height=0.6cm]

\begin{tikzpicture}

  %%%%%%%%%% noeuds  %%%%%%%%%%%%%%%%  
    \node(enc2)[form1] { TC};    %encoder 2

    \path (enc2.north)+(0,1.5)  node[form1](enc1) {TC}; %encoder1
     
     \path (enc2.south)+(0,-1.5)  node[form1](enc3) {TC}; %encoder 3
     
     \path(enc2.east)+(-2.9,0) node[form5](inter1){$\mcM_1$};    %interleaver 1
     
      \path(enc3.east)+(-2.9,0) node[form5](inter2){$\mcM_2$};    %interleaver2 
      
       \path (enc2.east)+(1.6,0)  node[form11](punc2) {$\mcP_1$}; %P2
     
       \path (enc1.east)+(1.6,0)  node[form11](punc1) {$\mcP_2$}; %P1 
     
       \path (enc3.east)+(1.6,0)  node[form11](punc3) {$\mcP_3$}; %P3 

        \path (punc2.east)+(1.5,0)  node[form11](mod2) {$\mcX$}; %X2
     
       \path (punc1.east)+(1.5,0)  node[form11](mod1) {$\mcX$}; %X1 
     
       \path (punc3.east)+(1.5,0)  node[form11](mod3) {$\mcX$}; %X3
       
        \begin{pgfonlayer}{background} 
      \path (enc2.north)+(1.4,1.5)  node[form01](cont1){}; %container1
       \path (cont1.south)+(0,-1.25)  node[form01](cont2) {}; %container2
      \path (cont2.south)+(0,-1.25)  node[form01](cont3) {}; %container3
          \end{pgfonlayer}
          
       \path (cont1.west)+(0.15,0.7)  node[form05](phi1) {$\Phi_1[\cd]$}; 
        \path (cont2.west)+(0.15,0.7)  node[form05](phi2) {$\Phi_2[\cd]$}; 
          \path (cont3.west)+(0.15,0.7)  node[form05](phi3) {$\Phi_3[\cd]$};

 %%%%%%%%%%%%% flesh %%%%%%%%%%%%%%%%%
    \draw [draw,->,>=latex] (-4.7,2.05)--(enc1.west)node[shift={(-2.7,0.2)}]{$\mfm_1$};         
    
        \path[draw,->,>=latex] (-4.7,-0.2)--node[shift={(0.,.2)}]{$\mfm_2$}(-2.5,-0.2);    
           
       \path[draw,->,>=latex] (-4.7,-2.4)--node[shift={(-0.05,.2)}]{$\mfm_3$}(-2.5,-2.4);

          \path[draw,->,>=latex] (mod1.east)--node[shift={(0.,.2)}]{$\bx_1$}(6.2,2.05);    
      
       \path[draw,->,>=latex] (mod2.east)--node[shift={(0.,.2)}]{$\bx_2$}(6.2,0);   
           
       \path[draw,->,>=latex] (mod3.east)--node[shift={(0.,.2)}]{$\bx_3$}(6.2,-2.06);

             \path[draw,->,>=latex] (punc1.east)--node[shift={(0.,.2)}]{$\bc_1$}(mod1.west);    
      
       \path[draw,->,>=latex] (punc2.east)--node[shift={(0.,.2)}]{$\bc_2$}(mod2.west);   
           
       \path[draw,->,>=latex] (punc3.east)--node[shift={(0.,.2)}]{$\bc_3$}(mod3.west);

            \path[draw,->,>=latex] (0.8,2.3)--node[shift={(0.,.25)}]{$\mfm_{1}$}(1.7,2.3); 
            \path[draw,->,>=latex] (0.8,1.9)--node[shift={(0.,-.25)}]{$\bp_1$}(1.7,1.9);

         \path[draw,->,>=latex] (0.8,0.2)--node[shift={(0.,.25)}]{$\mfm_{[2]}$}(1.7,0.2); 
            \path[draw,->,>=latex] (0.8,-0.2)--node[shift={(0.,-.25)}]{$\bp_2$}(1.7,-0.2); 
       
        \path[draw,->,>=latex] (0.8,-1.9)--node[shift={(0.,.25)}]{$\mfm_{[3]}$}(1.7,-1.9); 
            \path[draw,->,>=latex] (0.8,-2.3)--node[shift={(0.,-.25)}]{$\bp_3$}(1.7,-2.3);
       
         \path[draw,->,>=latex] (inter1.east)--node[shift={(0.,.25)}]{$\mfm_{[2]}$}(enc2.west);   
         \path[draw,->,>=latex] (inter2.east)--node[shift={(0.,.25)}]{$\mfm_{[3]}$}(enc3.west);
         
        \draw [draw,-,>=latex] (-3.3,2.05)--node[shift={(-0.2,-0.2)}]{}(-3.3,-1.75);  %m1---> inf
          \draw [draw,->,>=latex] (-3.3,0.2)--node[shift={(-0.2,-.2)}]{}(-2.5,0.2);  %m1---> inter1
         \draw [draw,->,>=latex] (-3.305,-1.75)--node[shift={(-0.2,-.2)}]{}(-2.5,-1.75);  %m1---> inter2

          \draw [draw,-,>=latex] (-3.5,-0.2)--node[shift={(-0.2,-0.2)}]{}(-3.5,-2.05);  %m2---> inf
         \draw [draw,->,>=latex] (-3.505,-2.05)--node[shift={(-0.2,-.2)}]{}(-2.5,-2.05);  %m2---> p2

%%%%%%%%%% points %%%%%%%%%%%%%%%%

    \draw [thick,fill=black] (-3.3,2.05)circle (.5mm); %point m1 enc1 
    \draw [thick,fill=black] (-3.3,0.2)circle (.5mm); %point m1 inter1 
     \draw [thick,fill=black] (-3.5,-0.2)circle (.5mm); %point m2 inter1

\end{tikzpicture}}
\caption{Implementation of the encoders $\Phi_k[\cd]$ using turbo codes (TC), bit multiplexing ($\mcM_k$), puncturing ($\mcP$), and modulation ($\mcX$).}\label{Fig:HARQ.Turbo}
\end{center}
\end{figure}
%%----------------------------------------------------------------------------

Since we do not have the closed-form formula which describes the probability of error under particular channel conditions, especially when multiples transmissions are involved, the rate-adaptation approach seems to be out of reach and we focus on finding the fixed coding rates $\R_k, k=1,\ld,\kmax$. We use the brute search over the space of available coding rates which verifies the following conditions $\sum_{k=1}^{\kmax}\R_{k}\leq 8$, $\R_{1}\in\set{1.5,1.75,2,\ld,3.75}$, $\R_{k}\in\set{0,0.25,\ld,3.75}, \forall k>1$. 

The results obtained are shown in \figref{Fig:Rayleigh.joint.truncated.allocation.turbo} where the \gls{snr} gap (for the throughput $\eta=3$) between \gls{xp} and the conventional \gls{irharq} is $\sim1.5\dBval$ for $\kmax=2$ and $\sim2\dBval$ for $\kmax=3\dBval$. We attribute a small improvement of the throughput $\eta^{\xp}_3$ over $\eta^{\xp}_2$ to the suboptimal encoding scheme we consider in this example. 

We also note that the improvement of $\eta^{\IR}_3$ with respect to $\eta^{\IR}_2$  does not materialize. This is because \gls{irharq} is optimized for $\R_1$ but, due to limitation of the turbo encoder which generates only $3\Nb$ bits, a full redundancy cannot be always obtained and,  in such a case, we are forced to repeat the systematic and parity bits.  This explains why $\eta^{\IR}_3$ and $\eta^{\IR}_2$ are very similar for low throughput. On the other hand, they should be, indeed, similar for high throughput as we have seen in the numerical examples before.  

We show in \figref{Fig:Rayleigh.joint.truncated.allocation.turbo} the ergodic capacity where the gap to the throughput of the TC-based transmission is increased by additional ~$3\dBval$ which should be expected when using relatively-short codewords and practical decoders.

%================
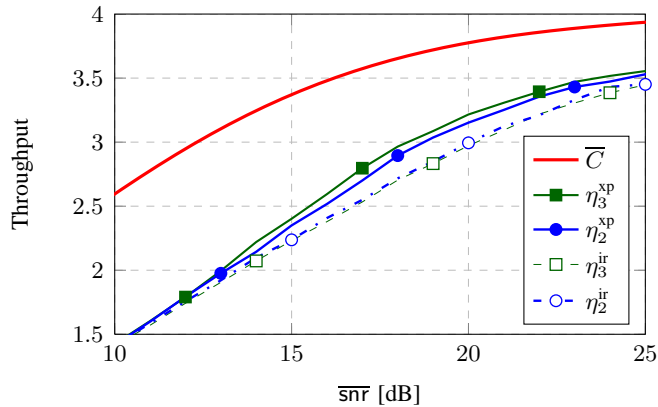
\begin{figure}[tb]
%    %!TEX root =  ../HARQ.joint.codec.tex
%

\centering
\newcommand{\lsize}{\footnotesize}
\pgfplotsset{tick label style={font=\footnotesize},
    label style={font=\footnotesize},
    legend style={font=\footnotesize}
}
%\tikzset{every mark/.append style={size=2,solid}}
\tikzset{every mark/.append style={mark size = 2.2, solid, fill=white, line width=0.5pt,}}
\begin{tikzpicture}%[trim axis left]
 \begin{axis}[
	axis lines = box,
	xlabel={\lsize $\SNRav$\dB},
	ylabel={\lsize  Throughput},
	xmin=10,xmax=25,
	ymin=1.5,ymax=4,
	xtick={5,10,...,30},
	ytick={0,0.5,..., 4},
	legend pos=south east,
	xmajorgrids=true,
	ymajorgrids=true,
	grid style=dashed,
	clip=true,
	legend style={
		draw=black,
		cells={anchor=west},
	},
]
%%%%
\addplot[color=red, line width=1.2pt,]
table[x=SNRdB, y=Ergodic]{./figures/16QAM.Ergodic.Capacity.Rayleigh.dat};
\addlegendentry{$\ov{C}$}
%%%%

\addplot[color=black!60!green, line width=0.8pt,solid,mark=square*,mark repeat=5,mark phase=2,mark options={black!60!green}]
table[x=SNRdB, y=Throughput3]{./figures/16QAM.Joint.Coding.Truncated.Rayleigh.One.bit.Turbo.code.dat};
\addlegendentry{$\eta^{\xp}_3$}

\addplot[color=blue,solid, line width=0.9pt,mark=*,mark repeat=5,mark phase=3,mark options={blue}]
table[x=SNRdB, y=Throughput2]{./figures/16QAM.Joint.Coding.Truncated.Rayleigh.One.bit.Turbo.code.dat};
\addlegendentry{$\eta^{\xp}_2$}

%%%%

%%%%

\addplot[color=black!60!green, line width=0.pt,dashdotted,mark=square*,mark repeat=5,mark phase=4,]
table[x=SNRdB, y=Throughput3]{./figures/16QAM.C.HARQ.Truncated.Rayleigh.One.bit.Turbo.code.dat};
\addlegendentry{$\eta^{\IR}_3$}

\addplot[color=blue, line width=0.9pt,loosely dashdotted,mark=*,mark repeat=5,mark phase=5,]
table[x=SNRdB, y=Throughput2]{./figures/16QAM.C.HARQ.Truncated.Rayleigh.One.bit.Turbo.code.dat};
\addlegendentry{$\eta^{\IR}_2$}

\end{axis}

\end{tikzpicture}
\caption{Turbo-coded transmission: the conventional \gls{irharq} ($\eta_\kmax$) is compared to \gls{xp} ($\eta^{\xp}_\kmax$)  in Rayleigh block-fading channel. }\label{Fig:Rayleigh.joint.truncated.allocation.turbo}
\end{figure}
%================
%

%%%%%

\section{Conclusions}\label{Sec:Conclusions}
In this work we proposed and analyzed a coding strategy tailored for \gls{harq} protocol and aiming at the increase of the throughput for  transmission over block fading channel.  Unlike many heuristic coding schemes proposed  previously, our goal was to  address explicitly the issue of joint coding of many packets into the channel block of predefined length. With such a setup, the challenge is to optimize the coding rates for each packet which we do efficiently  assuming existence of a multi-bits feedback channel which transmit the outdated \gls{csi} experienced by the receiver.

The throughput of the resulting \gls{xp} is  compared to the conventional \gls{irharq} indicating that significant gains can be obtained using the proposed coding strategy. The gains are particularly notable in the range of high throughput, where the conventional \gls{harq} fails to offer any improvement with increasing number of transmission rounds. The proposed encoding scheme may be seen as a method to increase the throughput, or as a mean to diminish the memory requirements at the receiver; the price for the improvements is paid by a more complex joint encoding/decoding.

We also proposed an example of a practical implementation based on turbo codes. This example highlights the practical aspects of the proposed coding scheme, where the most important difficulties are i)~the need of tailoring the encoder to provide the jointly coded symbols with the best decoding performance, and ii)~the design of the simple decoder. Moreover, the real challenge is to leverage the possibility of adaptation to the outdated \gls{csi}. To do so, simple techniques for performance evaluation (\eg the \gls{per}) based on the expected \gls{csi}, must be used; such as, for example those studied in \cite{Latif13}.

%%%%%%%
\begin{appendices}
%\input{./Includes/Appendix_Decoding_Proof.tex}
%!TEX root =  ../HARQ.joint.codec.tex

\section{Decoding conditions of \gls{xp}}\label{Sec:Decoding_Proof}

We outline the proof of the decoding conditions \eqref{dec.1} and \eqref{dec.2}, stated in the following  \lemref{lemma:DecodingConditions}. The HARQ-code refers to the encoding functions stated in \eqref{encoding.1} and \eqref{conventional.code} and the joint decoding of the pair $[\mfm_1, \mfm_2]$.

%\begin{lemma}[Decoding conditions]\label{lemma:DecodingConditions}
%For all $\varepsilon>0$, there exists a  $\bar{n}\in \mathbb{N}$ such that for all $n\geq \bar{n}$, there exists a HARQ-code $c^{\star} $ such that for all \gls{snr} realization $(\SNR_1,\SNR_2)$ that satisfy:
%\begin{eqnarray}
%R_1 &\leq& \mfI(X_1 ; Y_1 |\SNR_{1}) -  \varepsilon , \label{eq:Rate0} 
%\end{eqnarray}
%or:
%\begin{eqnarray}
%R_1 +  R_2 &\leq& \mfI(X_1 ; Y_1 |\SNR_{1}) + \mfI(X_2 ; Y_2 |\SNR_{2}) -  \varepsilon, \label{eq:Rate2} \\
%R_2 &\leq& \mfI(X_2 ; Y_2 |\SNR_{2}) -  \varepsilon . \label{eq:Rate1}
%\end{eqnarray}
%the error probability defined below, is bounded by $\mathcal{P}_e \leq \varepsilon$,
%%\begin{footnotesize}
%\begin{eqnarray*}
%\mathcal{P}_e =
%\begin{cases}
%\mathcal{P}\bigg(  \mfm_1\neq \hat{\mfm_1} \bigg|\;c^{\star}, \SNR_{1}  \;\bigg) ,\qquad \quad \text{ if \eqref{eq:Rate0} is satisfied,}\\
%\mathcal{P}\bigg( ( \mfm_1, \mfm_2) \neq (  \hat{\mfm_1}, \hat{\mfm_2})   \bigg|\;c^{\star}, \SNR_{1}, \SNR_{2}  \;\bigg), \text{ otherwise.}
%\end{cases}
%\end{eqnarray*}
%%\end{footnotesize}
%\end{lemma}

\begin{lemma}[Decoding conditions]\label{lemma:DecodingConditions}
For all $\varepsilon>0$, there exists $\bar{n}\in \mathbb{N}$ such that for all $n\geq \bar{n}$, there exists a HARQ-code $c^{\star} $ such that for all \gls{snr} realization $(\SNR_1,\SNR_2)$ that satisfy:
\begin{eqnarray}
R_1 +  R_2 &\leq& \mfI(X_1 ; Y_1 |\SNR_{1}) + \mfI(X_2 ; Y_2 |\SNR_{2}) -  \varepsilon, \label{eq:Rate2} \\
R_2 &\leq& \mfI(X_2 ; Y_2 |\SNR_{2}) -  \varepsilon , \label{eq:Rate1}
\end{eqnarray}
the error probability  is bounded by
%\begin{footnotesize}
\begin{eqnarray}
\PR{ [ \mfm_1, \mfm_2] \neq [ \hat{\mfm}_1, \hat{\mfm}_2]   \bigg|c^{\star}, \SNR_{1}, \SNR_{2} }\leq \varepsilon. 
\end{eqnarray}
%\end{footnotesize}
\end{lemma}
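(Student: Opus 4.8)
The plan is to prove Lemma~\ref{lemma:DecodingConditions} by a random-coding argument of superposition type: $\mfm_1$ plays the role of a ``cloud-centre'' index carried by both $\bx_1$ and $\bx_2$, while $\mfm_2$ is a ``satellite'' index carried only by $\bx_2$. \textbf{Codebook.} For every $\mfm_1$ I would draw $\bx_1(\mfm_1)$ with i.i.d. components uniform on $\mcX$, and independently, for every pair $(\mfm_1,\mfm_2)$, draw $\bx_2(\mfm_1,\mfm_2)$ with i.i.d. components uniform on $\mcX$, all draws mutually independent; this defines the random HARQ-code $c$ via $\Phi_1[\mfm_1]=\bx_1(\mfm_1)$, $\Phi_2[\mfm_1,\mfm_2]=\bx_2(\mfm_1,\mfm_2)$, consistent with \eqref{encoding.1}--\eqref{conventional.code}. \textbf{Decoder.} Since $(\SNR_1,\SNR_2)$ is known at the receiver, let $\mcT_1,\mcT_2$ be the jointly typical sets for $(X_1,Y_1)$ and $(X_2,Y_2)$ under the channel law \eqref{y.x.z} at those SNRs, with parameter $\delta$; given $(\by_1,\by_2)$ the decoder outputs the unique $(\hat\mfm_1,\hat\mfm_2)$ with $(\bx_1(\hat\mfm_1),\by_1)\in\mcT_1$ and $(\bx_2(\hat\mfm_1,\hat\mfm_2),\by_2)\in\mcT_2$, and declares an error otherwise.

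\textbf{Error analysis.} Conditioning on the sent pair and on $(\SNR_1,\SNR_2)$, I would split the error event into $\mcE_0$ (the true pair fails the test), $\mcE_1$ (a competitor with $\hat\mfm_1=\mfm_1$, $\hat\mfm_2\neq\mfm_2$ passes), and $\mcE_2$ (a competitor with $\hat\mfm_1\neq\mfm_1$ passes). By the law of large numbers $\PR{\mcE_0}\to 0$. For $\mcE_1$, each of the at most $2^{R_2\Ns}$ wrong values $\mfm_2'$ has $\bx_2(\mfm_1,\mfm_2')$ independent of $\by_2$, hence lands in $\mcT_2$ with probability at most $2^{-\Ns(\mfI(X_2;Y_2|\SNR_2)-\delta)}$, so a union bound gives $\PR{\mcE_1}\leq 2^{-\Ns(\mfI(X_2;Y_2|\SNR_2)-R_2-\delta)}$, vanishing under \eqref{eq:Rate1}. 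For $\mcE_2$, a competitor with $\mfm_1'\neq\mfm_1$ has $\bx_1(\mfm_1')$ independent of $\by_1$ and $\bx_2(\mfm_1',\mfm_2')$ independent of $\by_2$, and these two events are independent (distinct codewords, independent noise), so the pair passes both tests with probability at most $2^{-\Ns(\mfI(X_1;Y_1|\SNR_1)-\delta)}\,2^{-\Ns(\mfI(X_2;Y_2|\SNR_2)-\delta)}$; there are fewer than $2^{(R_1+R_2)\Ns}$ such competitors, whence $\PR{\mcE_2}\leq 2^{-\Ns(\mfI(X_1;Y_1|\SNR_1)+\mfI(X_2;Y_2|\SNR_2)-R_1-R_2-2\delta)}$, vanishing under \eqref{eq:Rate2}. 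Taking $\delta$ small relative to $\varepsilon$ and then $\Ns$ large makes the ensemble-averaged conditional error probability at $(\SNR_1,\SNR_2)$ at most $\varepsilon$.

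\textbf{From one SNR pair to the whole region.} The statement requires a single code $c^\star$ serving every admissible $(\SNR_1,\SNR_2)$ simultaneously, so after the pointwise bound I would use a covering argument. Because $X_i$ is uniform on the finite set $\mcX$, $\mfI(X_i;Y_i|\SNR_i)$ lies in $[0,\log_2 M]$ and is continuous and nondecreasing in $\SNR_i$; hence the admissible region maps into a bounded subset of the $(\mfI(X_1;Y_1),\mfI(X_2;Y_2))$-plane, which can be covered by finitely many cells of diameter $<\delta$. On each cell the exponents above degrade by only an $O(\delta)$ term, and the decoder attached to the cell's corner is admissible since the SNR is known; a union bound over the finitely many cells shows the ensemble average of the supremum over the admissible region of the conditional error probability is at most $\varepsilon$ for $\Ns$ large. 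Therefore at least one realisation $c^\star$ of the codebook satisfies $\PR{[\mfm_1,\mfm_2]\neq[\hat\mfm_1,\hat\mfm_2]\mid c^\star,\SNR_1,\SNR_2}\leq\varepsilon$ for every admissible $(\SNR_1,\SNR_2)$, which is the claim.

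\textbf{Main obstacle.} The three-way error split and its counting are routine; the delicate step is the last one — converting ``good for each SNR pair'' into ``good, with one code, over the entire continuum of SNR pairs.'' Finiteness of $\mcX$, and hence boundedness of the mutual-information region, is precisely what makes the covering work, and one must check that the $O(\delta)$ slackening of the exponents is uniform over each cell, which follows from continuity of $\mfI(X_i;Y_i|\SNR_i)$ and of the typical-set probabilities in the underlying law. I would also remark that the absence of a separate constraint $R_1\leq\mfI(X_1;Y_1|\SNR_1)$ is not an omission: it is implied by \eqref{eq:Rate2} together with $R_2\geq 0$, matching the reduction of $\nack_2$ in \eqref{nack.2.a}--\eqref{nack.2}.
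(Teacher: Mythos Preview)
Your argument is essentially the same as the paper's: random i.i.d.\ codebooks for $\bx_1(\mfm_1)$ and $\bx_2(\mfm_1,\mfm_2)$, joint-typicality decoding on both blocks, and a packing-lemma/union-bound analysis of the error events; your $\mcE_1$ and $\mcE_2$ correspond to the paper's $E_3$ and $E_1\cup E_2$, and the paper's extra event $E_2$ (only $\mfm_1$ wrong) yields the redundant condition $R_1\leq \mfI(X_1;Y_1|\SNR_1)+\mfI(X_2;Y_2|\SNR_2)-\varepsilon$, which, as you note, is already implied by \eqref{eq:Rate2}. The one substantive addition in your write-up is the covering step that extracts a single code $c^\star$ working simultaneously for every admissible $(\SNR_1,\SNR_2)$; the paper simply asserts existence after the pointwise bound and does not spell this out.
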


\begin{proof}[Proof of Lemma \ref{lemma:DecodingConditions}]
We consider the random HARQ-code:
\begin{itemize}
\item[$\bullet$] \textit{Random codebook:} we generate $ 2^{\Ns \cdot R_1 } $ codewords $\bx_1$ and $2^{\Ns \cdot ( R_1 + R_2 )} $ codewords $\bx_2$, drawn from the uniform distribution over the constellation $\X$. 
\item[$\bullet$] \textit{Encoding function:} as explained in \secref{Sec:joint.codec}, the encoder starts by sending $\bx_1$ which corresponds to the packet (or \emph{message} in the language of information theory) $\mfm_1$. If the encoder receives a feedback $\nack_1$, it sends $\bx_2$ corresponding to the pair of messages $[\mfm_1 , \mfm_2]$. Otherwise a new transmission process starts. 
\item[$\bullet$] \textit{Decoding function:}  if the \gls{snr} realizations $(\SNR_1,\SNR_2)$  satisfy equations \eqref{eq:Rate1} and \eqref{eq:Rate2}, then the decoder finds a pair of messages $[\mfm_1,\mfm_2]$  such that the following sequences of symbols are jointly typical:
\begin{eqnarray}
\Big(\Phi_1[\mfm_1] , \by_1  \Big) \in A_{\varepsilon}^{{\star}{\Ns}},\; 
\Big(\Phi_2[\mfm_1, \mfm_2] , \by_2 \Big) \in A_{\varepsilon}^{{\star}{\Ns}}. \label{eq:JointTypical2}
\end{eqnarray}
\item[$\bullet$] \textit{Error} is declared when sequences  are not jointly typical. 
\end{itemize}

%\begin{figure}[!ht]
%\begin{center}
%%\begin{tiny}
%%\begin{large}
%\psset{xunit=0.7cm,yunit=0.4cm}
%\begin{pspicture}(-7,10.4)(7,19.5)
%\psdots(-4.5,16)(-4.5,14)(-4.5,11)
%\rput[l](-4,16){$\bx_1[\mfm_1]$}
%\rput[u](-6.2,13.5){$ 2^{\Ns R_1  } $}
%\psbrace[linecolor=black,ref=rC](-4.5,16.5)(-4.5,10.5){}
%%%
%\psline[linewidth=0.5pt,linestyle=dashed]{-}(-2,20)(-2,10)
%\rput[r](-2,19.5){\footnotesize First Transmission\quad}
%\rput[l](-2,19.5){\footnotesize\quad Second Transmission}
%
%%%
%\psellipse(2,16)(3,0.7)
%\psellipse(2,14)(3,0.7)
%%\psdots(2,12.8)(2,12.5)(2,12.2)
%\psellipse(2,11)(3,0.7)
%\psdots(1,16)
%\rput[l](1.5,16){$\bx_2[\mfm_1, \mfm_2]$}
%\rput[u](2,18.6){$2^{\Ns R_2  } $}
%\psbrace[linecolor=black,ref=rC](5,16.7)(-1,16.7){}
%\end{pspicture}
%%\end{tiny}
%%\end{large}
%\caption{Codebook $\bx_1[\mfm_1]$, $\bx_2[\mfm_1, \mfm_2]$ for two transmissions. }
%%for HARQ-code ${c}  \in\mathcal{C}(n,R_1,R_2)$.}
%\end{center}
%\label{figure:BinningLeszekCode}
%\end{figure}

\textbf{Error events.} We define the following error events:  
\begin{footnotesize}
\begin{itemize}
\item[$\bullet$]$E_0=\bigg\{\Big(\Phi_1[\mfm_1] , \by_1  \Big) \notin A_{\varepsilon}^{{\star}{\Ns}}\bigg\}  \cup \bigg\{\Big(\Phi_2[\mfm_1, \mfm_2] , \by_2 \Big) \notin A_{\varepsilon}^{{\star}{\Ns}}\bigg\}$,\\
%\item[$\bullet$]$E_1 =\bigg\{\exists \mfm_1' \neq \mfm_1,\text{ s.t. } \big( x_1^n(m'_1) , y_1^n\big) \in  A_{\varepsilon}^{{\star}{n}}\big(Q(x,y|k_{1})\big)\bigg\}$.\\
\item[$\bullet$]$E_{1} =\bigg\{\exists [\mfm_1',\mfm_2'] \neq [\mfm_1,\mfm_2],\text{ s.t. }\\
\qquad \qquad\qquad\qquad \qquad \Big\{ \Big(\Phi_1[\mfm_1'] , \by_1  \Big) \in  A_{\varepsilon}^{{\star}{\Ns}} \Big\} \cap \Big\{  \Big(\Phi_2[\mfm_1', \mfm_2'] , \by_2 \Big) \in A_{\varepsilon}^{{\star}{\Ns}} \Big\} \bigg\}$,\\
\item[$\bullet$]$E_{2} =\bigg\{\exists \mfm_1'\neq \mfm_1,\text{ s.t. }\\
\qquad \qquad \qquad\qquad\qquad\Big\{ \Big(\Phi_1[\mfm_1'] , \by_1  \Big) \in  A_{\varepsilon}^{{\star}{\Ns}} \Big\} \cap \Big\{  \Big(\Phi_2[\mfm_1', \mfm_2] , \by_2 \Big) \in A_{\varepsilon}^{{\star}{\Ns}} \Big\} \bigg\}$,\\
\item[$\bullet$]$E_{3} =\bigg\{\exists \mfm_2'\neq \mfm_2,\text{ s.t. }   \Big(\Phi_2[\mfm_1, \mfm_2'] , \by_2 \Big) \in A_{\varepsilon}^{{\star}{\Ns}}  \bigg\}$.
\end{itemize}
\end{footnotesize}

The properties of the typical sequences imply that, for $\Ns$ large enough, $\PR{E_0} \leq \varepsilon$, and the Packing Lemma \cite[p.~46]{ElGammalKim(book)11} implies that the probabilities of the events  $E_1$, $E_2$, $E_3$ are bounded by $\varepsilon$ if the following conditions are satisfied
\begin{eqnarray}
R_1 +  R_2 &\leq& \mfI(X_1 ; Y_1 |\SNR_{1}) + \mfI(X_2 ; Y_2 |\SNR_{2}) -  \varepsilon, \label{eq:Second1}\\
R_2 &\leq& \mfI(X_2 ; Y_2 |\SNR_{2}) -  \varepsilon \label{eq:Second3},\\
R_1  &\leq& \mfI(X_1 ; Y_1 |\SNR_{1}) + \mfI(X_2 ; Y_2 |\SNR_{2}) -  \varepsilon,  \label{eq:Second2}.
\end{eqnarray}
Since \eqref{eq:Second1}-\eqref{eq:Second3} are the hypothesis \eqref{eq:Rate2}-\eqref{eq:Rate1} of  \lemref{lemma:DecodingConditions}, there exists HARQ-code $c^{\star}$ with  small error probability. 

%$\mathcal{P}(E_1)$, $\mathcal{P}(E_2)$, $\mathcal{P}(E_3)$  \\
%$\bullet$ If equation \eqref{eq:Rate0} is satisfied, then only the transmission of message$\mfm_1$ occurs. The probability of event $E_1$ is below $\mathcal{P}(E_1) \leq \varepsilon \Rightarrow \mathcal{P}_e \leq \varepsilon$.\\
%$\bullet$ If equation \eqref{eq:Rate0} is not satisfied, then the encoder receives $\nack_1$ and starts the second transmission of the pair $(\mfm_1,\mfm_2)$. The probability of events $E_2$, $E_3$, $E_4$ are below $\mathcal{P}(E_1) \leq \varepsilon \Rightarrow \mathcal{P}_e \leq \varepsilon$ since the following equations are satisfied:
%\begin{eqnarray}
%R_1 +  R_2 &\leq& \mfI(X_1 ; Y_1 |\SNR_{1}) + \mfI(X_2 ; Y_2 |\SNR_{2}) -  \varepsilon, \label{eq:Second1}\\
%R_2 &\leq& \mfI(X_2 ; Y_2 |\SNR_{2}) -  \varepsilon \label{eq:Second3},\\
%R_1  &\leq& \mfI(X_1 ; Y_1 |\SNR_{1}) + \mfI(X_2 ; Y_2 |\SNR_{2}) -  \varepsilon.  \label{eq:Second2}
%\end{eqnarray}

\end{proof}

%!TEX root =  ../HARQ.joint.codec.tex
%%%%%%
\section{Optimization via \gls{mdp}}\label{Sec:OptimalAdaptation}

To obtain the \gls{mdp} formulation it is  convenient to replace packet-wise notation of \eqref{y.x.z} with a time-wise model 
\begin{align}\label{}
\by[n]=\sqrt{\SNR[n]}\bx[n]+\bz[n],
\end{align}
where $n$ is the index of the channel block.

At each time $n$, the \gls{harq} controller observes the \emph{state} $\mfs[n]$, and takes an \emph{action} $\mfa[n]=\pi(\mfa[n])$, according to the policy $\pi$. The transition probability matrix, $\bQ(\mfa)$, has the elements
\begin{align}\label{eq:trans_vl}
Q_{\mfs,\mfs'}(\mfa)\triangleq \displaystyle{\Pr\set{ \mfs[n+1]=\mfs'|\mfs[n]=\mfs, \mfa[n]=\mfa}},
 \end{align}
defining the probabilities of the system moving to the state $\mfs' \in \mcS$ at time $n+1$ conditioned on the system being in the state $\mfs \in \mcS$ at time $n$ and the controller taking the action $\mfa\in\mcA(\mfs)$, where $\mcA(\mfs)$ is the set of actions allowed in a state $\mfs$ and  $\underset{\mfs \in \mcS}{\bigcup}\mcA(\mfs)=\mcA$. In our case, the actions are the coding rates, which we assume may take any positive value, and thus $\mcA(\mfs)=\Real_+$.

A \emph{policy} $\pi$ is defined as a mapping $\pi:\mcS\mapsto\mcA$ between the state space, $\mcS$, and the action space, $\mcA$. We aim at finding a policy $\pi$ which maximizes the long-term average throughput
\begin{align}\label{throughput.t}
\eta(\pi) &= \lim_{N \rightarrow \infty}\frac{1}{N}\sum_{n=1}^N \Ex\big[ \mfR(\mfs[n],\pi(\mfs[n])) \big],
 \end{align}
where $\mfR(\mfs,\mfa)$ is the average reward obtained when taking action $\mfa$ in the state $\mfs$ and the  expectations are taken with respect to the random states $\mfs[n]$. In our case the reward is the number of decoded bits normalized by the duration of the channel block, $\Ns$. 

The optimal policy thus solves the following problem:
\begin{align}\label{op.throughput.t}
\hat{\eta}^{\xp}_{\kmax}=\max_{\pi(\cd)}\eta(\pi)
 \end{align}
 and may be found solving the Bellman equations \cite[Prop.~4.2.1]{Bertsekas07_book}
\begin{align}\label{bellman.eq}
\hat{\eta}^{\xp}_{\kmax}+h(\mfs)&=\max_{\mfa \in \mcA(\mfs)}\left[ \mfR(\mfs,\mfa) + \sum_{\mfs'\in\mcS} 
Q_{\mfs,\mfs'}(\mfa) h(\mfs')\right],\quad \forall \mfs\in\mcS,
\end{align}
where $h(\mfs)$ is a difference reward associated with the state. To calculate the optimal $\hat{\eta}^{\xp}_{\kmax}$, we use here the policy iteration algorithm whose details may be found in \cite[Sec.~4.4.1]{Bertsekas07_book} and which guarantees to reach the solution after a finite number of iterations. 

The unique optimal throughput $\hat{\eta}^{\xp}_{\kmax}$ exists and is independent of the initial state, $\mfs[0]$ if, for any state $\mfs'[t]\in\mcS$, we can find a policy, which starting with arbitrary state $\mfs[0]$ reaches the state $\mfs'[t]$ in a finite time $t<\infty$, with non-zero probability \cite[Prop.~4.2.6 and Prop.~4.2.4]{Bertsekas07_book}. For our problems, finding such a policy is indeed possible, proof of which we skip for sake of brevity.% \ls{This condition also allows us to solve \eqref{op.throughput.t} with the well-know algorithms such as policy-iteration \cite[Sec.~4.4.1]{Bertsekas07_book} or value-iteration \cite[Sec.~4.3.1]{Bertsekas07_book}; we used the former to obtain the numerical results.} \md{we already saied that we used the policy-iteration in the end of the previous paragraph!}

In order to define the state space and the average reward, we deal separately with the truncated and  persistent \gls{xp} but in both cases we must track the accumulated rate, $\Rsig[n]$ (it defines the reward, $\mfR(\mfs,\mfa)$), and the accumulated \gls{mi}, $\Isig[n]$ (it defines the matrix $\bQ$). Thus these two variables must enter the definition of the state, $\mfs[n]$. %\md{To be noted here} that one dimension of the state ($\Rsig$) is used to summarize the actions taken in previous time instants of the \gls{harq} cycle.

\subsection{Persistent HARQ}

For the persistent \gls{xp}, the state can be defined as a pair
\begin{align}\label{state.persistent}
\mfs[n]\triangleq(\Isig[n],\Rsig[n]),
\end{align}
and the transition to the state at time $n+1$ is defined as
\begin{align}\label{eq:dynamic.pers}
\mfs[n+1]=
\begin{cases}
\big(\Isig[n]+I[n],\Rsig[n]+\R[n]\big), \\
 \qquad~\quad \text{if}  \quad  \Rsig[n]+\R[n]\geq \Isig[n]+I[n]\\
\big(0, 0\big), \quad \text{otherwise}.
\end{cases}.
\end{align}
A non-zero reward is obtained only by terminating the \gls{harq} cycle, \ie moving to the state  $\mfs[n+1]=(0,0)$,
\begin{align}\label{reward.persistent}
\mfR(\mfs[n],\mfa)=&\big(\Rsig[n]+\mfa\big)\ccdf_I (\Rsig[n]-\Isig[n]+\mfa),
\end{align}
where $\ccdf_I(x)\triangleq 1-\cdf_I(x)$ and $\cdf_I(x)$ is the \gls{cdf} of $I$.

\subsection{Truncated HARQ}
In the truncated \gls{harq}, a new \gls{harq} cycle starts also if the maximum number of allowed rounds is attained (even if the message is not decoded correctly). Thus i)~the index of the transmission round, $\mfk$, must enter the defining of the state, ii)~we need to make a distinction between the decoding success/failure of the last round. We thus define the state as 
 \begin{align}\label{state.persistent}
\mfs[n]\triangleq(\Isig[n],\Rsig[n],\mfk[n],\mfM[n]),
\end{align}
where $\mfk[n]$ and $\mfM[n]\in\set{\ack,\nack}$ are respectively, the number of rounds and the decoding result after the transmission in block $n$. The system dynamic is described as follows:
\begin{align}\nonumber%\label{eq:dynamic.trunc}
\mfs[n+1]=
\begin{cases}
\big(0, 0,0,\ack\big), \quad ~~\text{if} \quad \Eharqa[n] \\
\big(0, 0,0,\nack\big), \quad \text{if} \quad \Eharqn[n] \\
\big(\Isig[n]+I[n],\Rsig[n]+\R[n],\mfk[n]+1,\nack\big), \\
 \qquad~\qquad \text{otherwise} 
\end{cases}
\end{align}
where 
\begin{align}\nonumber
%\Eharqa[n]&\triangleq\set{\Rsig[n]+\R[n]\leq \Isig[n]+I[n]~\wedge~\mfk[n]+1 \le\kmax}\\
\Eharqa[n]&\triangleq\set{\Rsig[n]+\R[n]\leq \Isig[n]+I[n]}\\
\nonumber
\Eharqn[n]&\triangleq\set{\Rsig[n]+\R[n]> \Isig[n]+I[n]~\wedge~\mfk[n]+1=\kmax}
\end{align}
are respectively, the conditions indicating a successful decoding and a decoding failure at the end of the \gls{harq} cycle. 

Thus, the state space is defined as: $\mcS=\Real_{+}\times\Real_{+}\times\set{0,1,\ld,\kmax-1}\times\set{\ack,\nack}$ and the reward is defined by \eqref{reward.persistent}.

%%%%%%%%%%%%%%%%%%%%%%%%%%%%%%%%%%%%%%%%%%%%%%%%
\section{Optimal MDP for $K=2$}\label{Sec:OptimalAdaptation.Analytic}

Knowing the rate of the first transmission, $\R_{1}$, the optimization problem \eqref{op.throughput.t} may be solved analytically for $\kmax=2$ using \eqref{eq.mm.ad}
\begin{align}\label{op.throughput.definition.RR.2} 
&\displaystyle{\hat{\eta}^\xp_2=\max_{ \R_{2}(I_{1})} \frac{\Ex\Big[\R_{1}\IND{I_{1}\ge \R_{1}}\Big]}{1+f_{1}}+}\nonumber \\ 
&\displaystyle{\frac{\Ex\Big[(\R_{1}+ \R_{2}(I_{1}))\IND{I_{1}\le \R_{1} \wedge \Isig_2\ge \R_{1}+\R_{2}(I_{1})}\Big]}{1+f_{1}}}.
%%\\
%&\displaystyle{=\max_{ \R_{2}(I_{1})} \frac{\R_{1}(1-f_{1})}{1+f_{1}}}+\nonumber \\ 
%&\displaystyle{\frac{\Ex_{\{I_{1},I_{2}\}}\Big[(\R_{1}+ \R_{2}(I_{1}))\IND{I_{1}\le \R_{1} \wedge I_{1}+I_{2}\ge \R_{1}+\R_{2}(I_{1})}\Big]}{1+f_{1}}},
\end{align}

Since $f_{1}$ is independent of $\R_{2}(\cd)$, solving \eqref{op.throughput.definition.RR.2} is equivalent to
finding, for each value of $I_{1}<\R_{1}$, the optimal $\R_{2}(\cd)$ as follows
\begin{align}\label{} 
\R_{2}(I_{1})=\argmax_{R}~(\R_{1}+ R) \cd \cdf_{I_{2}}^{\tr{c}}(\R_{1}+R-I_{1}).
\end{align}
which is a  one-dimension optimization problem, that can be solved analytically, provided $\cdf_{I_{2}}^{\tr{c}}(\cd)$ is known.

In the case of Gaussian codebook, \ie when the \gls{mi} is given by $I_{k}=\log_{2}(1+\SNR_{k})$, the optimal rate adaptation policy is given by the following closed-form
\begin{align}\label{eq:optimal.analitical.policy} 
\R_{2}(I_{1})=\max\big(0,\frac{W(2^{I_{1}} \SNRav )}{\log(2)}-R_{1}\big),
\end{align}
where $W(.)$ is Lambert $W$ function defined as the solution of $x=W(x)\e^{W(x)}$.

\end{appendices}
%%%%%%%

%\section*{Acknowledgement}

\bibliographystyle{IEEEtran}

\balance

\end{document}